\newtheorem{theorem}{Theorem}
\newtheorem{lemma}[theorem]{Lemma}
\newcommand*{\su}{\mathfrak{su}}
\newcommand*{\fix}[1]{\textcolor{black}{#1}}
\begin{document}
\title{Scalable modular architecture for universal quantum computation}

\author{Fernando Gago-Encinas}
\altaffiliation{Present address: Deutsches Elektronen-Synchrotron DESY, Platanenallee 6, 15738 Zeuthen,
Germany}
\affiliation{Freie Universität Berlin, Fachbereich Physik and Dahlem Center for Complex Quantum Systems, Arnimallee 14, 14195 Berlin, Germany}
\author{Christiane P. Koch}
\email{christiane.koch@fu-berlin.de}
\affiliation{Freie Universität Berlin, Fachbereich Physik and Dahlem Center for Complex Quantum Systems, Arnimallee 14, 14195 Berlin, Germany}

\date{\today}

\begin{abstract}
    Universal quantum computing requires the ability to perform every unitary operation, i.e., evolution operator controllability. In view of developing resource-efficient quantum processing units (QPUs), it is important to determine how many local controls and qubit-qubit couplings are required for  controllability. Unfortunately, assessing the controllability of large qubit arrays is a difficult task, due to the exponential scaling of Hilbert space dimension. Here we show that it is sufficient to connect two qubit arrays that are evolution operator controllable by a single entangling two-qubit gate in order to obtain a composite qubit array that is evolution operator controllable. The proof provides a template to build up modular QPUs from smaller building blocks with reduced numbers of local controls and couplings. 
We illustrate the approach with two examples, consisting of 10, respectively 127 qubits, inspired by IBM quantum processors. 
\end{abstract}

\maketitle

\section{Introduction}
\label{sec:intro}

The development of quantum hardware is making fast progress over the past few years, bringing us from the NISQ era~\cite{preskill2018quantum} closer to fault tolerance~\cite{postler2022demonstration} and practical quantum error correction~\cite{acharya2025quantum,Bluvstein2025}. At the forefront are architectures based on trapped neutral~\cite{Bluvstein2025,Chiu2025} and ionic~\cite{postler2022demonstration,ionqbenchmarking,nigmatullin_experimental_2025} atoms as well as superconducting qubits~\cite{acharya2025quantum,castelvecchi2023ibm,patra2024efficient,gao2025establishing, menta2025globally, cioni2024conveyor}, all of which are expected to scale up further. 
Important landmarks such as the implementation of fault-tolerant operations~\cite{postler2022demonstration} or demonstrations of break-even for quantum error correction~\cite{acharya2025quantum,Bluvstein2025} 
and quantum simulation~\cite{manovitz_quantum_2025,nigmatullin_experimental_2025}
shift the focus to larger devices.
Typically, scalable designs are obtained via a modular approach, linking together multiple smaller quantum processors with limited connectivity rather than developing larger processors \cite{bruzewicz2019trapped, mohseni2024build, huang2024zap, abughanem2025ibm}. 

One requirement for realizing quantum computing is  the ability to perform every unitary operation~\cite{divincenzo2000physical}. In control theory, this is termed evolution operator controllability~\cite{dAlessandro2021}.
Controllability analysis can thus be used as  a valuable tool in quantum chip design to ensure that all operations on the device are possible while at the same time simplifying the device~\cite{gago2023graph,gago2023determining}.  
The same concepts that are at the core of controllability analysis can also be employed to  the expressivity of parameterized quantum circuits~\cite{gago2023determining, allcock2025generating}. Controllability tests can be used to identify the sweet spot balancing  universality and the complexity of the system's architecture~\cite{gago2023graph} where complexity refers to qubit-qubit couplings and external qubit controls, i.e., the resources to connect and alter the qubits. Reducing the amount of built-in resources while maintaining the computational capability of the quantum device has not yet been a focus for device design but will be important for further scaling up devices, as is evident just by looking at the number of classical control lines needed to operate today's most advanced QPUs~\cite{postler2022demonstration,acharya2025quantum,Bluvstein2025,Chiu2025,ionqbenchmarking,castelvecchi2023ibm,patra2024efficient,gao2025establishing}. 

Controllability can be rigorously proven by studying the dynamical Lie algebra of the system, i.e., the algebra generated by a system's Hamiltonian split into its time-independent drift and all the control operators~\cite{dAlessandro2021}. Full rank of the dynamical Lie algebra implies that the system is evolution operator controllable~\cite{dAlessandro2021}. 
However, only in exceptional cases can the dynamical Lie algebra be calculated analytically~\cite{Schirmer_2002,Wang2012,PozzoliJPA22,orozco2024quantum, wiersema2024classification, smith2025optimally}.
Alternatively, one can determine the rank of the dynamical Lie algebra numerically. The most straightforward way would be to construct a complete basis of the dynamical Lie algebra (which is also a vector space) but  this becomes unstable in the required orthogonalization step already for a relatively small number of qubits. Construction of the full algebra can be circumvented by graph methods~\cite{BCCS, gago2023graph}. Nevertheless, this approach will eventually fail too, due to the exponential scaling, with the number of qubits, of the size of Hilbert space and thus the dimension of the graph. While the curse of dimensionality can --- in principle --- be evaded by carrying out the controllability test via a parametric quantum circuit in a hybrid approach combining classical optimization with a quantum device~\cite{gago2023determining}, it remains yet to be seen how reliably the dimensional expressivity test at the core of this method can be carried out on noisy quantum devices and for larger numbers of qubits. Thus, at present, controllability tests that are generally applicable as well as demonstratedly reliable and scalable are missing. 
A change in perspective is made possible when proving controllability of a large quantum system from the controllability of its constituent subsystems~\cite{zeier2011symmetry, albertini2025quantum, smith2025optimally} .

Here, we leverage this change in perspective for quantum chip design.
Instead of attempting to prove controllability for all qubits at once, we show how to build up controllable qubit arrays from smaller, fully controllable subsystems. 
We can start from building blocks as small as two (or a few) qubits where it is straightforward to prove controllability. Surprisingly, a single controllable entangling link between two qubits in each of the quantum processing subunits is sufficient. Such links are routinely realized in current quantum computing architectures, for example via tunable couplers \cite{niskanen2007quantum, bialczak2011fast, campbell2023modular} for superconducting qubits. 
While, for a given number of qubits, our approach will not result in the QPU with the smallest possible number of local controls and qubit-qubit couplings, it allows for designs with significantly reduced resources. 
As an example, we first discuss how two five-qubit arrays that are connected via a tunable coupler result in a controllable ten-qubit array. We then show that this method can be scaled up to generate large devices, thus opening a new route for more resource-efficient device design.

\section{Controllability of bipartite systems with an entangling control}\label{sec:main}

We briefly recap evolution operator controllability of a quantum system~\cite{dAlessandro2021}. Consider a system with Hilbert space dimension $\dim(\mathcal{H})=n$ whose Hamiltonian can be written as 
\begin{equation}
    \hat{H}(t) = \hat{H}_0 + \sum_{j=1}^m u_j (t) \hat{H}_j
\end{equation}
for some time-independent drift $\hat{H}_0$, time-dependent external controls $u_j (t)$ and linearly coupled control operators $\hat{H}_j$. The system is said to be evolution operator controllable if, for any target unitary $\hat{U}_{tgt}\in U(n)$, there exist controls $u_j(t)$, a final time $T$ and a phase $\varphi\in[0, 2\pi]$ such that the time evolution generated by $\hat{H}$ at time $T$ with controls $u_j$ verifies $\hat{U}(T, u_1, ..., u_j) = e^{i \varphi} \hat{U}_{tgt}$. In other words, a system is said to be controllable if any unitary operation can be implemented at some final time up to a global phase. 

Now, we show how to ensure that controllability is maintained when assembling a larger system by connecting smaller controllable systems. The main idea is rather simple. Assume that we have two, possibly different, qubit arrays ``modules") with given configurations of qubit-qubit couplings and local qubit controls, that are both evolution operator controllable. 
Then it suffices to connect the two arrays with a control acting on two qubits, one in each module, that is capable of generating entanglement between them, for the resulting multipartite system (composed of the original arrays and the new entangling control) to also be evolution operator controllable. We can keep connecting modules to the original system with one entangling operation per new module and thus design an arbitrarily large system that is evolution operator controllable.  

Note that ignoring local contributions, a two-qubit coupling between the $\mu$-th qubit in module $\mathcal{A}$ and the $n$-th qubit in module $\mathcal{B}$ can be in general written as
\begin{equation}\label{eqn:2qubit}
    \hat{H}^{\mu, n}_c  = \sum_{\alpha, j = 1}^3 c_{\alpha, j } \hat{\sigma}_{\alpha}^{(\mu)} \otimes \hat{\sigma}_{j}^{(n)}\,,
\end{equation}
where the coefficients $c_{\alpha, j }$ are real-valued, with at least one of them nonzero. 
We can then state the main theorem formally as follows: 
\begin{theorem}\label{thm:modular}
    Let $\mathcal{A}$ and $\mathcal{B}$ be two evolution operator controllable qubit arrays with $M$ and $N$ qubits. Let $\hat{H}^{\mu, n}_c$ be a two-qubit operator given by \cref{eqn:2qubit}. Then, the extended bipartite system with a tunable two-qubit coupling with control operator $\hat{H}^{\mu, n}_c$ is operator controllable. 
\end{theorem}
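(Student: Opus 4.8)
The plan is to argue at the level of the dynamical Lie algebra $\mathfrak{g}$ of the extended system and to show that $\mathfrak{g}\supseteq\mathfrak{su}(2^{M+N})$, which (allowing the global phase $e^{i\varphi}$) is equivalent to evolution operator controllability. Writing the joint space as $\mathcal{H}_{\mathcal{A}}\otimes\mathcal{H}_{\mathcal{B}}$, the controllability of $\mathcal{A}$ and of $\mathcal{B}$ means that the Lie algebras generated by their drifts and controls are $\mathfrak{su}(2^{M})$ and $\mathfrak{su}(2^{N})$ (up to the central $i\mathds{1}$, which only shifts the phase); since in the composite these generators act as $(\,\cdot\,)\otimes\mathds{1}$ and $\mathds{1}\otimes(\,\cdot\,)$, the dynamical Lie algebra $\mathfrak{g}$ contains $\mathfrak{k}:=\bigl(\mathfrak{su}(2^{M})\otimes\mathds{1}\bigr)\oplus\bigl(\mathds{1}\otimes\mathfrak{su}(2^{N})\bigr)$ together with $i\hat{H}^{\mu,n}_{c}$. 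I would then decompose $\mathfrak{su}(2^{M+N})=\mathfrak{k}\oplus W$, where $W$ is the span of all tensor products $P_{A}\otimes P_{B}$ of a \emph{nonidentity} Pauli string on $\mathcal{A}$ and a \emph{nonidentity} Pauli string on $\mathcal{B}$ (the ``genuinely bipartite'' operators); a dimension count, $[(4^{M}-1)+(4^{N}-1)]+(4^{M}-1)(4^{N}-1)=4^{M+N}-1$, confirms this is indeed all of $\mathfrak{su}(2^{M+N})$. Crucially, the coupling in \cref{eqn:2qubit} carries no identity factor on either side, so $i\hat{H}^{\mu,n}_{c}$ is automatically a \emph{nonzero} element of $W$ (any local terms one might add to the coupling would simply be absorbed into $\mathfrak{k}$ and are harmless).

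The heart of the argument is then the observation that $W$ is an \emph{irreducible} module for $\mathfrak{k}$ under the commutator action: bracketing with $\mathfrak{su}(2^{M})\otimes\mathds{1}$ acts as the adjoint representation on the $\mathcal{A}$-factor of $W$ and trivially on the $\mathcal{B}$-factor, and symmetrically for $\mathds{1}\otimes\mathfrak{su}(2^{N})$, so as a $\mathfrak{k}$-module $W\cong\mathrm{ad}_{\mathfrak{su}(2^{M})}\boxtimes\mathrm{ad}_{\mathfrak{su}(2^{N})}$, an outer tensor product of the two adjoint representations. Because $\mathfrak{su}(d)$ is simple its adjoint representation is absolutely irreducible (its complexification is the adjoint representation of $\mathfrak{sl}(d,\mathbb{C})$), and an outer tensor product of absolutely irreducible representations of $\mathfrak{g}_{1}\oplus\mathfrak{g}_{2}$ is irreducible. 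Granting this, the Lie algebra generated by $\mathfrak{k}$ and the single nonzero vector $i\hat{H}^{\mu,n}_{c}\in W$ contains the $\mathfrak{k}$-submodule it generates; that submodule is nonzero, is $\mathfrak{k}$-invariant (note $[\mathfrak{k},W]\subseteq W$, since commutators are traceless and hence carry no identity component), and is contained in $W$, so by irreducibility it equals $W$. Therefore $W\subseteq\mathfrak{g}$, whence $\mathfrak{g}\supseteq\mathfrak{k}+W=\mathfrak{su}(2^{M+N})$ and the extended system is operator controllable.

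I expect the representation-theoretic step --- establishing that $W$ is $\mathfrak{k}$-irreducible --- to be the only genuine obstacle; everything else is bookkeeping about Pauli strings and tracelessness. A more hands-on route is also available and may read better in the paper: first use the full local control on qubit $\mu$ and qubit $n$ (available because $\mathcal{A}$ and $\mathcal{B}$ are controllable, and conjugation by the group $\exp\mathfrak{g}$ preserves $\mathfrak{g}$) to bring $\hat{H}^{\mu,n}_{c}$, via the singular value decomposition of its coefficient matrix $(c_{\alpha,j})$, into the canonical form $\sum_{k}\lambda_{k}\,\hat{\sigma}^{(\mu)}_{k}\otimes\hat{\sigma}^{(n)}_{k}$ with not all $\lambda_{k}$ zero; then isolate a single Ising-type term, e.g.\ $\hat{\sigma}^{(\mu)}_{z}\otimes\hat{\sigma}^{(n)}_{z}$, by iterated commutators with $\hat{\sigma}^{(\mu)}_{z}\otimes\mathds{1}$ and $\mathds{1}\otimes\hat{\sigma}^{(n)}_{z}$; and finally invoke the fact that two fully controllable registers joined by one entangling Ising link are jointly controllable --- which one proves either by the same irreducibility remark applied to $W$, or by an explicit induction that spreads the coupling over all qubit pairs. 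Either way the skeleton is the same: local control plus one genuinely bipartite term forces the bipartite block $W$ to be filled in, and hence forces the full algebra $\mathfrak{su}(2^{M+N})$.
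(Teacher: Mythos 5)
Your core argument is correct and takes a genuinely different route from the paper on the main step. Where the paper first isolates a single product $c_{\alpha,j}\,\hat{\sigma}_{\alpha}^{(\mu)}\otimes\hat{\sigma}_{j}^{(n)}$ by nested commutators and then walks it through the entire Pauli basis using the explicit index-manipulation operations $f_{cyc}$, $f_{gen\,k}$, $f_{rem}$ of \cref{ssec:localop} (this is \cref{lemma:entangling_algebras}), you replace all of that bookkeeping by the observation that the genuinely bipartite block $W$ is, as a module for $\mathfrak{k}=\mathfrak{su}(2^{M})\otimes\mathds{1}\,\oplus\,\mathds{1}\otimes\mathfrak{su}(2^{N})$, the outer tensor product of two adjoint representations and hence irreducible, so a single nonzero seed generates all of $W$. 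That is shorter and arguably more illuminating, provided you actually justify absolute irreducibility of $\mathrm{ad}\,\mathfrak{su}(d)$ and irreducibility of outer tensor products over $\mathbb{R}$ (both standard, but they should be cited or proved). Your dimension count is right, the remark that \cref{eqn:2qubit} carries no identity factors (so $i\hat{H}^{\mu,n}_{c}$ is a nonzero element of $W$) is exactly the reduction the paper performs by hand, and your ``hands-on'' fallback is essentially the paper's own proof.

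There is, however, one genuine gap: you assert $\mathfrak{g}\supseteq\mathfrak{k}$ on the grounds that ``these generators act as $(\cdot)\otimes\mathds{1}$ and $\mathds{1}\otimes(\cdot)$''. They do not. The composite system has a \emph{single} drift $i\bigl(\hat{A}_{0}\otimes\mathds{1}_{2^{N}}+\mathds{1}_{2^{M}}\otimes\hat{B}_{0}\bigr)$ (cf.\ \cref{eqn:HAB}); $i\hat{A}_{0}\otimes\mathds{1}$ and $i\mathds{1}\otimes\hat{B}_{0}$ are not separately available, and the drift may be indispensable for generating $\mathfrak{su}(2^{M})$ from $\{\hat{A}_{\rho}\}$ (e.g.\ one local control plus couplings sitting in the drift), so it cannot simply be dropped. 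Extracting the two local algebras from this sum is precisely what the paper's \cref{lemma:lin_dep,lemma:AplusB} do: commutators of the combined drift with the $\mathcal{A}$-controls annihilate the $\hat{B}_{0}$ summand (each summand commutes with the opposite factor), which recovers every depth-$\geq 1$ basis element of $\mathfrak{A}\otimes\mathds{1}_{2^{N}}$, and the remaining possibility that one direction stays missing would exhibit a codimension-one ideal of $\mathfrak{su}(2^{M})$, contradicting simplicity. Your proof needs this step (or an equivalent) inserted before the claim that $\mathfrak{g}$ contains $\mathfrak{k}$; with it, the rest of your argument goes through.
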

This result is a particular case of more general results on controllability of multipartite systems~\cite{zeier2011symmetry, albertini2025quantum}. Our proof of \cref{thm:modular}, presented in \cref{app:proof}, builds a full basis of the Lie algebra space. \fix{This provides information on the depth of the commutators required to implement an operation, in contrast with proofs based on symmetry arguments 
\cite{zeier2011symmetry} or graph theory \cite{albertini2025quantum}. In more detail, with our new approach, it is possible to determine the number of commutators needed to generate a given element of the Lie algebra, which in turn is related to the mimimal circuit depth required for a given unitary operation. }

Considering the simplest example from a gate-oriented perspective, we recover a well-known statement as special case of \cref{thm:modular} --- for two qubits, the set of local operations together with any entangling two-qubit gate form a universal set. Here, having access to all local gates implies that the two original systems, the separate qubits, are evolution operator controllable. The entangling gate has the same effect as the entangling control that we use to connect the systems. Note that already in the bipartite case, \cref{thm:modular} is slightly more general:
Substitute \textit{qudits} for the qubits, perhaps of different dimensions $d_1 = 2^M$ and $d_2 = 2^N$.  
Since the study is restricted to qubit arrays, the dimension of the qudits is also restricted to powers of 2.
The set of local unitary operators for two qubits $SU(2)\otimes SU(2)$ is replaced by $SU(d_1) \otimes SU(d_2)$. Every local gate in $SU(d_1) \otimes SU(d_2)$ can be implemented given the assumption of controllability on the separate partitions. \cref{thm:modular} then states that given two controllable qudits with dimensions $d_1$ and $d_2$ and an entangling Hermitian operator $\hat{G}$, 
the set composed of local operations on the two qudits and the entangling rotation gates $\hat{R}_{\hat{G}} (\phi)$ form a universal set. This universality of the set of local gates plus an entangling gate on qudits of the same size is a lesser known fact in quantum information~\cite{muthukrishnan2000multivalued, brylinski2002mathematics, brennen2005criteria}.  

Once two controllable systems are connected by an entangling control (e.g. a tunable coupler), they can be understood as a controllable system in itself. This implies that we can connect it to a different controllable system using a new entangling control to build a larger yet controllable system. 
Note that the type of entangling operation between partitions is not relevant to prove controllability. In particular, the statement is true for any two-qubit tunable coupling as long as each of the qubits belongs to a different partition. \fix{Such couplings are now routinely used in superconducting qubit architectures  \cite{Tan2023tunable, moskalenko2021tunable}, including the quantum processors by IBM~\cite{miessen2024benchmarking}.}

Naturally, having only one entangling operation between two partitions will limit the speed at which information (and in particular entanglement) can travel through the system. At the same time, a modular approach will also ease transpilation of quantum algorithms, making the use of partitions that are large enough to run certain subtasks a viable option. This type of architecture is also ideal for using each partition as sub-unit when parallelizing a more complex quantum algorithm. 
While these use cases would be specifically tailored to the modular design, the universality of the composite system ensures that \textit{any} kind of quantum circuit can be transpiled into a set of gates suitable for the modular system. 

\begin{figure}[tb]
\centering
\includegraphics[width=0.45\textwidth]{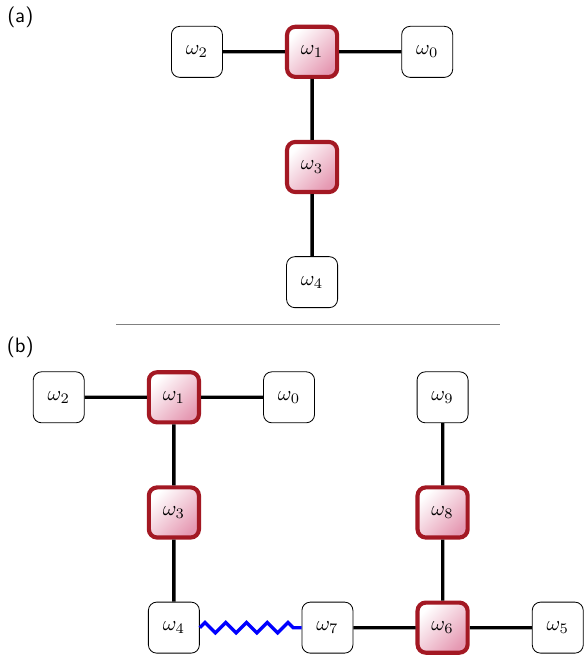}
\caption{(a): Controllable five-qubit system using just two controls~\cite{gago2023graph}. Qubits with (without) local controls are shaded (blank) and static couplings are drawn as straight black lines.
(b): Controllable system built up from two copies of the system in (a) with a tunable coupling (blue zig-zag connection). 
\label{fig:Tshapes}}
\end{figure}
As an example, consider a T-shaped five-qubit system, inspired by the former  five-qubit \textit{ibm\_quito} quantum processor by IBM (cf. \url{https://quantum.cloud.ibm.com/docs/en/guides/retired-qpus}), the controllability of which has been proven using a graph method~\cite{gago2023graph}. When properly placed, two local controls are sufficient for the five-qubit system with fixed two-qubit couplings in the form of $\hat{H}^{(k,l)} = \hat{\sigma}_x^{(k)}\hat{\sigma}_x^{(l)}+\hat{\sigma}_y^{(k)}\hat{\sigma}_y^{(l)}$ to be evolution operator controllable, as illustrated  also in \cref{fig:Tshapes}(a). Its Hamiltonian is given by 
\begin{align}\label{eqn:5Q}
    \hat{H}^{5Q}(t) = & -\sum_{j=0}^4 \frac{\omega_j}{2} \hat{\sigma}_z^{(j)} + \sum_{\substack{(k, l) \in \{(0,1), (1,2),\\ (1,3), (3,4)\}} } J_{k,l}\hat{H}^{(k,l)} \nonumber \\ &+ u_1 (t) \hat{\sigma}_x^{(1)}+ u_2 (t) \hat{\sigma}_x^{(3)}\,,
\end{align}
where $u_{1/2} (t)$ represent the two local controls, $\omega_j$ are the local qubit frequencies and $J_{k,l}$ the coupling strengths.  

Connecting two such five-qubit arrays by a tunable coupling, assumed to be of the form $\hat{\sigma}_x^{(4)}\hat{\sigma}_x^{(7)}$, acting on the fourth and seventh qubit, 
the Hamiltonian of the composite system is written as 
\begin{equation}\label{eqn:2T}
    \hat{H}(t) ^{2T_5} = \hat{H}^{5Q, \, A} (t) \otimes \mathds{1} + \mathds{1} \otimes \hat{H}^{5Q, \, B} (t) +w_{ent}(t) \hat{\sigma}_x^{(4)}\hat{\sigma}_x^{(7)}.
\end{equation}
where the partial Hamiltonians $\hat{H}^{5Q, \, (A/B)} (t)$ are given by \cref{eqn:5Q} (with the sole difference that the qubit frequencies $\omega^{(A/B)}_j$ and coupling strengths $J^{(A/B)}_{i,j}$ will differ). The controls for both partitions are also meant to be taken as independent. 
Evidently, qubit 4 and 7 belong to partitions $A$ and $B$, respectively. Note that the choice of specific qubits is fully arbitrary, and any other entangling control between qubits in the two subsystems would have done the trick. 

Once we have constructed a system that is evolution operator controllable, like the one displayed in \cref{fig:Tshapes}(b), we can add more controls without affecting controllability. On the contrary, this can be beneficial, allowing for faster unitary gates or generating the same Lie algebra using commutators of lower depth. As an example, consider the task of implementing an entangling operation on the zeroth and ninth qubit in \cref{fig:Tshapes}(b). Intuitively, the fastest option is expected if a tunable coupling between those two qubits exists, whereas the two qubits are at a distance of seven (six static couplings plus the entangling control following the chain 0-1-3-4-7-6-8-9) with the setup of \cref{fig:Tshapes}(b). This example illustrates a typical trade-off between connectivity and operation time: The quantum speed limit of gates~\cite{DeffnerReview2017, GoerzNPJQI17, GoerzJPB11, GuentherAVSQS2023, BasilewitschPRR2024} in large qubit arrays with minimal number of controls and couplings may grossly surpass the decoherence limit of the system.

\section{Design of large qubit arrays}\label{sec:design}

\begin{figure*}[tbp]
\centering
\includegraphics[width=\textwidth]{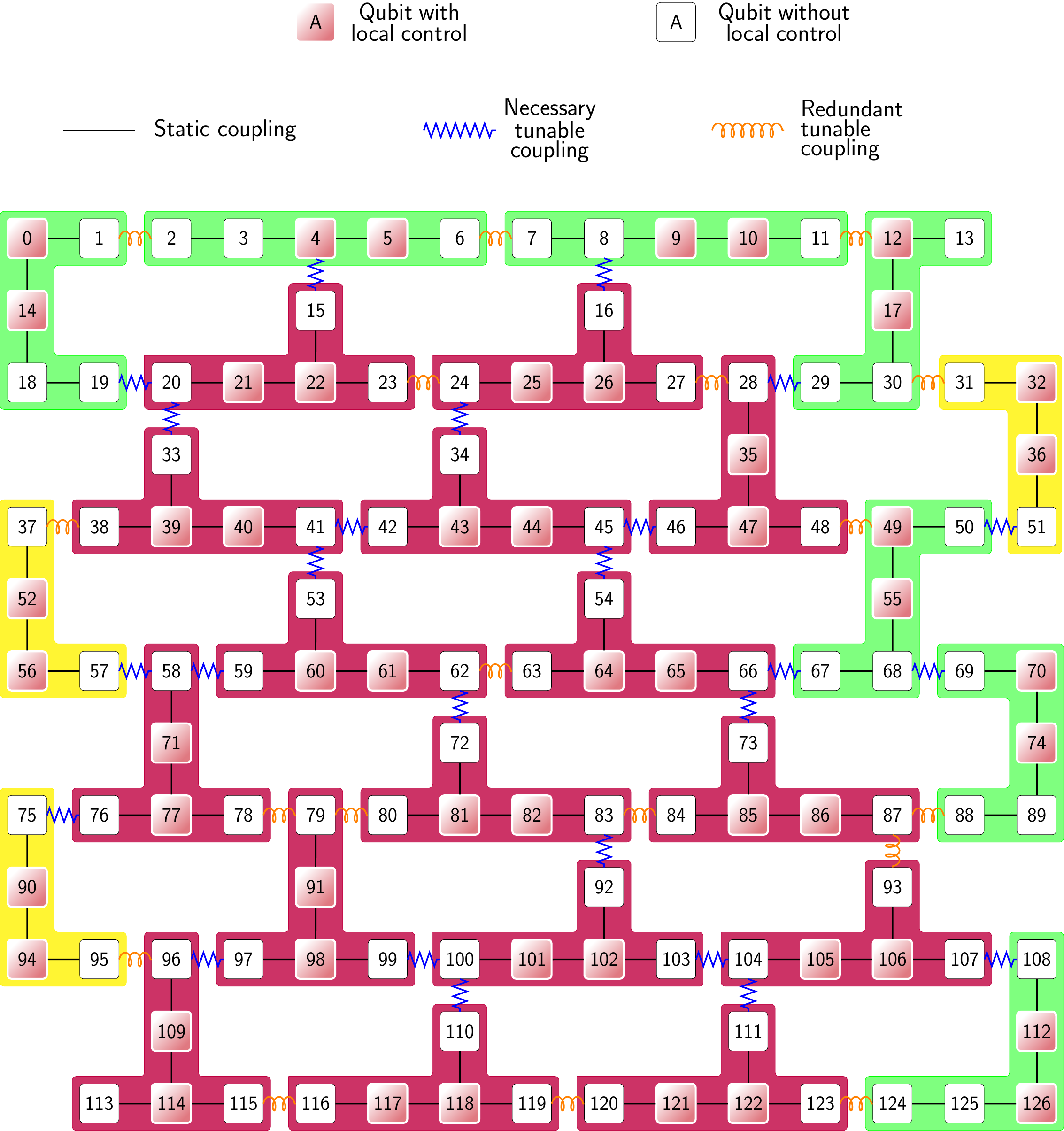}
\caption{\label{fig:127fewer} Controllable qubit array starting from the connectivity of IBM's Eagle processor (cf. \url{https://quantum.cloud.ibm.com/docs/en/guides/processor-types}) but with a reduced number of controls and couplings, based on a decomposition into three different modules consisting of four, resp. five qubits (indicated by the yellow, green and red background color). Shaded qubits are equipped with local controls, while the blank ones are not. Blue zigzag lines are required tunable couplings, whereas coiled lines represent tunable couplings that are present in Eagle but can be removed without hampering controllability. The couplings within each module are static, as opposed to the other tunable ones. The Hamiltonians for the three modules are given in \cref{eqn:127q}.   }
\end{figure*}

Next we discuss how to use \cref{thm:modular} to identify possible designs for large qubit arrays with fewer controls and couplings than those currently in use (knowing that the question of operation time will also have to be addressed). 
To showcase this, we take as basic layout one of the quantum processors from IBM, composed of 127 qubits. 
\fix{Several quantum algorithms have been implemented on this platform, using at least a hundred qubits. These include quantum simulations with 2D tranverse-field Ising models  \cite{kim2023evidence} and preparing the ground state of the lattice Schwinger model \cite{farrell2024scalable}, as two examples showcasing the benefit of controllable qubit arrays of increasing size.} 
The goal is then to design a similar architecture using smaller QPUs that have already been shown to be controllable. This will allow us to identify local controls and couplings that can be removed while maintaining evolution operator controllability, i.e., the capability for universal computing. 

The start point is a 127-qubit system with similar nearest-neighbor connectivity as shown in \cref{fig:127fewer}, but with local controls on every qubit and tunable two-qubit couplings on every shown connection. This two-dimensional layout is well suited for our approach, as it already has a regular pattern consisting mainly of three-by-five rectangles.  The size of this device would make proving controllability using the graph test~\cite{gago2023graph} or the dimensional expressivity test~\cite{gago2023determining} extremely challenging. 
\begin{subequations}\label{eqn:127q}
The first building block, or ``module", is a T-shaped five-qubit system. Its Hamiltonian can be written as
\begin{align}\label{eqn:5T}
    \hat{H}_{5T}(t) \; = \;  \sum_{j=0}^4 \hat{H}_{1qubit}^{(j)} +& \hat{H}_{control}^{(1)}(t)+ \hat{H}_{control}^{(3)}(t) \\
                    \quad +\hat{H}_{coup}^{(0,1)}+&\hat{H}_{coup}^{(1,2)}+\hat{H}_{coup}^{(1,3)}+\hat{H}_{coup}^{(3,4)} \nonumber\,,
\end{align}
where the single-qubit Hamiltonian $\hat{H}_{1qubit}^{(j)}$, the control Hamiltonian $\hat{H}_{control}^{(j)}(t)$ and the static coupling Hamiltonian $\hat{H}_{coup}^{(j, k)}$ are
\begin{align}
    \hat{H}_{1qubit}^{(j)} \; = \;-\frac{\omega_j}{2} \hat{\sigma}^{(j)}_z \,, \qquad  \hat{H}_{control}^{(j)}(t) \; = \;u_j(t) \hat{\sigma}^{(j)}_x \,,\nonumber\\
    \hat{H}_{coup}^{(j, k)} \; = \; J_{j,k} \left( \hat{\sigma}^{(j)}_x\hat{\sigma}^{(k)}_x+\hat{\sigma}^{(j)}_y\hat{\sigma}^{(k)}_y\right)
\end{align}
for qubit frequencies $\omega_j$, coupling strengths $J_{j,k}$ and controls $u_j(t)$. This configuration is based on an older five-qubit system by IBM, see also \cref{fig:Tshapes} (a). 
Two more building blocks provide a cover for the 127-qubit array. One is arbitrarily chosen to be a five-qubit system arranged in a line with Hamiltonian
\begin{align}\label{eqn:5L}
    \hat{H}_{5L}(t) \; = \;  \sum_{j=0}^4 \hat{H}_{qubit}^{(j)} +& \hat{H}_{control}^{(1)}(t)+ \hat{H}_{control}^{(2)}(t) \\
                    \quad +\hat{H}_{coup}^{(0,1)}+&\hat{H}_{coup}^{(1,2)}+\hat{H}_{coup}^{(2,3)}+\hat{H}_{coup}^{(3,4)}, \nonumber
\end{align}
with four static couplings and two local controls. For the other one,
removing one qubit from \cref{eqn:5L} defines a four-qubit line with similar features,
\begin{align}\label{eqn:4L}
    \hat{H}_{4L}(t) \; = \;  \sum_{j=0}^4 \hat{H}_{qubit}^{(j)} +& \hat{H}_{control}^{(1)}(t)+ \hat{H}_{control}^{(2)}(t) \\
                    \quad +\hat{H}_{coup}^{(0,1)}+&\hat{H}_{coup}^{(1,2)}+\hat{H}_{coup}^{(2,3)}+\hat{H}_{coup}^{(3,4)}\,. \nonumber
\end{align}
\end{subequations}
Using copies of the systems with $\hat{H}_{5T}(t)$, $\hat{H}_{5L}(t)$ and $\hat{H}_{4L}(t)$, the complete 127-qubit device can be built up. To make the total system controllable, according to \cref{thm:modular}, each subsytem must be connected to another one using a tunable coupling. The result is shown in Figure \ref{fig:127fewer}, where some of the tunable couplings from the original layout have been ``down-graded" to static couplings, effectively reducing the numbers of controls in the system, and some further ones have been removed altogether (the ones depicted with coiled orange lines) without losing controllability. 
The main achievement, however, is the reduction in the number of local controls to roughly two fifths of the original amount. This is obtained simply by reducing the number of controls in each module by finding controllable four- and five-qubit arrays with any desired controllability test. Comparing exact numbers, IBM's system is composed of 127 local controls and 144 tunable couplings; the alternative shown in \cref{fig:127fewer} has only 52 local controls, 101 static couplings, and 25 tunable couplings (marked as necessary couplings). In addition to these 25 couplings, there are a total of 18 extra tunable couplings which can be removed to lower the number of resources or left in place to speed up quantum information transfer. The modified system is a simplified version of the original one with fewer resources. 

As a last note, we can use \cref{thm:modular} also to prove that the original IBM system is controllable. To this end, simply recall two properties of controlled systems: (i) If a quantum system is controllable, the same system with more controls must be controllable, too. We have shown the system in \cref{fig:127fewer} to be controllable with only the static couplings and the necessary tunable couplings (indicated by blue zigzag lines). If all local controls and the redundant tunable couplings are added back in, the system must be controllable, too. (ii) If a qubit array with static couplings is controllable, then the same system but with tunable couplings is also controllable, since the dynamical Lie algebra of the former is contained in the Lie algebra of the latter system. Hence the modified system in \cref{fig:127fewer} with tunable instead of static couplings is controllable, and the system with local controls in every qubit and tunable couplings between any neighboring qubits must be controllable as well. In other words, we have proven that the original 127-qubit system with local controls on every qubit is evolution operator controllable, illustrating another possible application of \cref{thm:modular}.

\section{Summary and Perspectives}\label{sec:summary}

We have presented a method to assemble arbitrarily large controllable qubit arrays from smaller ones, proving rigorously that a single entangling two-qubit gate is sufficient to connect any pair of subsystems, irrespective of their specific layout, provided each subsystem is evolution operator controllable. Taking the IBM Eagle quantum processor as example, we have shown that evolution operator controllability can be guaranteed using far fewer couplings and controls than in the original system. The theorem thus provides a template for the modular design of QPUs, allocating controls and couplings based on their necessity for evolution operator controllability, a requirement for universal quantum computing.

While a smaller number of external controls frees up physical space on the chip and reduces calibration efforts, it comes at the price of longer operation times.
In future work, it will therefore be interesting to compare the minimal evolution time, or quantum speed limit, of the total system with those of the subsystems from which it is built up. The rate at which information can be exchanged in the multipartite system will depend on various factors, including the maximum amplitude of the entangling control and the coupling distance between any two qubits in the total array. To mitigate a possibly higher quantum speed limit, more than one tunable coupling can be added to speed up the implementation of entangling unitary operations. 

Another option, even if information transfer between  subsystems is slow, is to tailor the system architecture to run quantum algorithms that can be parallelized. Using each subsystem as a QPU, quantum circuits could be designed such that calculations are run in parallel on each subsystem and, at the end, all the information is merged before measuring. The parallelization of quantum algorithms is a current topic of discussion and has been hypothesized, for example, to allow for reduced circuit depths in some cases~\cite{broadbent2009parallelizing} or help with the implementation of decision diagrams~\cite{li2024parallelizing}. 

In future work, it will be interesting to extend our proof to systems that are not based on qubits. \fix{In view of previous controllability results \cite{zeier2011symmetry, albertini2025quantum}, Theorem \ref{thm:modular} holds also for quantum systems composed of subsystems with dimension other than $2^N$.} Throughout all lemmas, we have assumed two-level systems as elementary building blocks, allowing us to use tensor products of Pauli matrices as the basis for the dynamical Lie algebras. \fix{The same arguments will then need to be made using a different basis of the Lie algebra,} e.g. the set of generalized skew-Hermitian Pauli matrices. \fix{Finding a suitable set of operations such as the ones described in Appendix \ref{ssec:localop} will allow for retrieving information on commutator depth for qudit-based systems.}

\fix{A particularly relevant question for future work is a possible connection between the minimal commutator depth needed for controllability and the quantum speed limit of the system. Reducing the number of controls may lead to larger operation times, suggesting that there exists an optimal balance. In view of running quantum algorithms on actual devices, answering this question will also provide a link to the circuit depth of a quantum circuit transpiled to the native operations of the modular architecture.}

\fix{Finally, it will be important to learn how errors propagate in a system when the connections between modules are reduced. Assuming that gates have a finite error, fewer connections imply more operations in a circuit, which may result in an overall lower fidelity. The exact impact of error propagation remains be explored in the future.}

\begin{acknowledgments}
Financial support from the Einstein Research Foundation (Einstein Research Unit on
Near-Term Quantum Devices) and the Deutsche Forschungsgemeinschaft project KO 2301/15-1, No. 505622963
is gratefully acknowledged. This work is supported with funds from the Ministry of Science, Research and Culture of the State of Brandenburg within the Centre for Quantum Technology and Applications (CQTA).
\begin{center}
    \includegraphics[width = 0.1\textwidth]{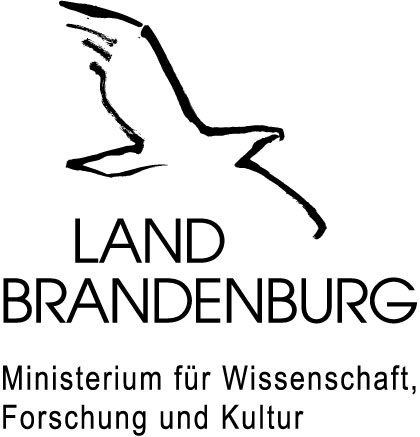}
\end{center}
\end{acknowledgments}

\bibliography{biblio}

@book{dAlessandro2021,
  title={Introduction to quantum control and dynamics},
  author={d’Alessandro, Domenico},
  year={2021},
  publisher={CRC press}, 
  doi = {10.1201/9781003051268},
  url = {https://doi.org/10.1201/9781003051268} 
}

@article{gago2023graph,
  title={Graph test of controllability in qubit arrays: A systematic way to determine the minimum number of external controls},
  author={Gago-Encinas, Fernando and Leibscher, Monika and Koch, Christiane P.},
  journal={Quantum Science and Technology},
  volume={8},
  number={4},
  pages={045002},
  year={2023},
  doi = {10.1088/2058-9565/ace1a4},
  url = { https://doi.org/10.1088/2058-9565/ace1a4}
}

@article{gago2023determining,
  title={Determining the ability for universal quantum computing: Testing controllability via dimensional expressivity},
  author={Gago-Encinas, Fernando and Hartung, Tobias and Reich, Daniel M and Jansen, Karl and Koch, Christiane P},
  journal={Quantum},
  volume={7},
  pages={1214},
  year={2023},
  publisher={Verein zur F{\"o}rderung des Open Access Publizierens in den Quantenwissenschaften},
  doi = {10.22331/q-2023-12-21-1214},
  url = { https://doi.org/10.22331/q-2023-12-21-1214}
}

@article{divincenzo2000physical,
  title={The physical implementation of quantum computation},
  author={DiVincenzo, David P},
  journal={Fortschritte der Physik: Progress of Physics},
  volume={48},
  number={9-11},
  pages={771--783},
  year={2000},
  publisher={Wiley Online Library},
  doi = {10.1002/1521-3978(200009)48:9/11<771::AID-PROP771>3.0.CO;2-E},
	URL = {https://doi.org/10.1002/1521-3978(200009)48:9/11<771::AID-PROP771>3.0.CO;2-E}
}

@article{broadbent2009parallelizing,
  title={Parallelizing quantum circuits},
  author={Broadbent, Anne and Kashefi, Elham},
  journal={Theoretical computer science},
  volume={410},
  number={26},
  pages={2489--2510},
  year={2009},
  publisher={Elsevier},
  doi = {10.1016/j.tcs.2008.12.046},
  url = {https://doi.org/10.1016/j.tcs.2008.12.046}
}

@article{li2024parallelizing,
  title={Parallelizing quantum simulation with decision diagrams},
  author={Li, Shaowen and Kimura, Yusuke and Sato, Hiroyuki and Fujita, Masahiro},
  journal={IEEE Transactions on Quantum Engineering},
  year={2024},
  publisher={IEEE},
  doi = {10.1109/TQE.2024.3364546},
  url = {https://doi.org/10.1109/TQE.2024.3364546}
}

@article {BCCS,
    AUTHOR = {Boscain, U. and Caponigro, M. and Chambrion, T. and Sigalotti,
              M.},
     TITLE = {A weak spectral condition for the controllability of the
              bilinear {S}chrödinger equation with application to the
              control of a rotating planar molecule},
   JOURNAL = {Comm. Math. Phys.},
  FJOURNAL = {Communications in Mathematical Physics},
    VOLUME = {311},
      YEAR = {2012},
    NUMBER = {2},
     PAGES = {423--455},
      ISSN = {0010-3616},
   MRCLASS = {81Q93 (81Q10)},
  MRNUMBER = {2902195},
MRREVIEWER = {Pedro Mar\'{\i}n Rubio},
doi = {10.1007/s00220-012-1441-z}
}

@article{allcock2025generating,
  title={On generating direct powers of dynamical Lie algebras},
  author={Allcock, Jonathan and Santha, Miklos and Yuan, Pei and Zhang, Shengyu},
  journal={arXiv preprint arXiv:2506.05733},
  year={2025},
doi = {10.48550/arXiv.2506.05733}
}

@article{niskanen2007quantum,
  title={Quantum coherent tunable coupling of superconducting qubits},
  author={Niskanen, AO and Harrabi, Khalil and Yoshihara, F and Nakamura, Y and Lloyd, S and Tsai, Jaw Shen},
  journal={Science},
  volume={316},
  number={5825},
  pages={723--726},
  year={2007},
  publisher={American Association for the Advancement of Science},
doi = {10.1126/science.1141324}
}

@article{bialczak2011fast,
  title={Fast tunable coupler for superconducting qubits},
  author={Bialczak, RC and Ansmann, M and Hofheinz, M and Lenander, M and Lucero, E and Neeley, M and O’Connell, AD and Sank, D and Wang, H and Weides, M and others},
  journal={Physical review letters},
  volume={106},
  number={6},
  pages={060501},
  year={2011},
  publisher={APS},
doi = {10.1103/PhysRevLett.106.060501}
}

@article{campbell2023modular,
  title={Modular tunable coupler for superconducting circuits},
  author={Campbell, Daniel L and Kamal, Archana and Ranzani, Leonardo and Senatore, Michael and LaHaye, Matthew D},
  journal={Physical Review Applied},
  volume={19},
  number={6},
  pages={064043},
  year={2023},
  publisher={APS},
doi = {10.1103/PhysRevApplied.19.064043}
}

@article{mohseni2024build,
  title={How to build a quantum supercomputer: Scaling challenges and opportunities},
  author={Mohseni, Masoud and Scherer, Artur and Johnson, K Grace and Wertheim, Oded and Otten, Matthew and Aadit, Navid Anjum and Bresniker, Kirk M and Camsari, Kerem Y and Chapman, Barbara and Chatterjee, Soumitra and others},
  journal={arXiv preprint  	arXiv:2411.10406},
  year={2024},
doi = {10.48550/arXiv.2411.10406} }

@article{huang2024zap,
  title={ZAP: Zoned Architecture and Parallelizable Compiler for Field Programmable Atom Array},
  author={Huang, Chen and Zhao, Xi and Xu, Hongze and Zhuang, Weifeng and Hu, Meng-Jun and Liu, Dong E and Wang, Jingbo},
  journal={arXiv preprint arXiv:2411.14037},
  year={2024},
doi = {10.48550/arXiv.2411.14037} 
}

@article{ionqbenchmarking,
  title={Benchmarking a trapped-ion quantum computer with 30 qubits},
  author={Chen, Jwo-Sy and Nielsen, Erik and Ebert, Matthew and Inlek, Volkan and Wright, Kenneth and Chaplin, Vandiver and Maksymov, Andrii and P{\'a}ez, Eduardo and Poudel, Amrit and Maunz, Peter and others},
  journal={Quantum},
  volume={8},
  pages={1516},
  year={2024},
  publisher={Verein zur F{\"o}rderung des Open Access Publizierens in den Quantenwissenschaften},
doi = {10.22331/q-2024-11-07-1516} 
}

@article{acharya2025quantum,
  title={Quantum error correction below the surface code threshold},
  author={Acharya, Rajeev and Abanin, Dmitry A and Aghababaie-Beni, Laleh and Aleiner, Igor and Andersen, Trond I and Ansmann, Markus and Arute, Frank and Arya, Kunal and Asfaw, Abraham and Astrakhantsev, Nikita and others},
  journal={Nature},
  year={2025},
  volume={638},
  pages={920–-926},
doi = {10.1038/s41586-024-08449-y} 
}

@article{abughanem2025ibm,
  title={IBM quantum computers: Evolution, performance, and future directions},
  author={AbuGhanem, Muhammad},
  journal={The Journal of Supercomputing},
  volume={81},
  number={5},
  pages={687},
  year={2025},
  publisher={Springer},
doi = {10.1038/s41586-024-08449-y} 
}

@article{preskill2018quantum,
  title={Quantum computing in the NISQ era and beyond},
  author={Preskill, John},
  journal={Quantum},
  volume={2},
  pages={79},
  year={2018},
  publisher={Verein zur F{\"o}rderung des Open Access Publizierens in den Quantenwissenschaften},
doi = {10.22331/q-2018-08-06-79} 
}

@article{castelvecchi2023ibm,
  title={IBM releases first-ever 1,000-qubit quantum chip},
  author={Castelvecchi, Davide},
  journal={Nature},
  volume={624},
  number={7991},
  pages={238--238},
  year={2023},
  publisher={Nature},
doi = {doi.org/10.1038/d41586-023-03854-1} 
}

@article{gao2025establishing,
  title={Establishing a new benchmark in quantum computational advantage with 105-qubit zuchongzhi 3.0 processor},
  author={Gao, Dongxin and Fan, Daojin and Zha, Chen and Bei, Jiahao and Cai, Guoqing and Cai, Jianbin and Cao, Sirui and Chen, Fusheng and Chen, Jiang and Chen, Kefu and others},
  journal={Physical Review Letters},
  volume={134},
  number={9},
  pages={090601},
  year={2025},
  publisher={APS},
doi = {10.1103/PhysRevLett.134.090601} 
}

@article{patra2024efficient,
  title={Efficient tensor network simulation of IBM's largest quantum processors},
  author={Patra, Siddhartha and Jahromi, Saeed S and Singh, Sukhbinder and Or{\'u}s, Rom{\'a}n},
  journal={Physical Review Research},
  volume={6},
  number={1},
  pages={013326},
  year={2024},
  publisher={APS},
doi = {10.1038/d41586-021-03476-5} 
}

@article{postler2022demonstration,
  title={Demonstration of fault-tolerant universal quantum gate operations},
  author={Postler, Lukas and Heu$\beta$en, Sascha and Pogorelov, Ivan and Rispler, Manuel and Feldker, Thomas and Meth, Michael and Marciniak, Christian D and Stricker, Roman and Ringbauer, Martin and Blatt, Rainer and others},
  journal={Nature},
  volume={605},
  number={7911},
  pages={675--680},
  year={2022},
  publisher={Nature Publishing Group UK London},
doi = {10.1038/s41586-022-04721-1} 
}

@article{orozco2024quantum,
  title={Quantum control without quantum states},
  author={Orozco-Ruiz, Modesto and Le, Nguyen H and Mintert, Florian},
  journal={PRX Quantum},
  volume={5},
  number={4},
  pages={040346},
  year={2024},
  publisher={APS},
doi = {10.1103/PRXQuantum.5.040346} 
}

@article{bruzewicz2019trapped,
  title={Trapped-ion quantum computing: Progress and challenges},
  author={Bruzewicz, Colin D and Chiaverini, John and McConnell, Robert and Sage, Jeremy M},
  journal={Applied physics reviews},
  volume={6},
  number={2},
  year={2019},
  publisher={AIP Publishing},
doi = {10.1063/1.5088164} 
}

@article{Bluvstein2025,
    author = {Dolev Bluvstein and Alexandra A. Geim and Sophie H. Li and Simon J. Evered and J. Pablo Bonilla Ataides and Gefen Baranes and Andi Gu and Tom Manovitz, Muqing Xu and Marcin Kalinowski and Shayan Majidy and Christian Kokail and Nishad Maskara and Elias C. Trapp and Luke M. Stewart and Simon Hollerith, Hengyun Zhou and Michael J. Gullans and Susanne F. Yelin and Markus Greiner and Vladan Vuletic and Madelyn Cain and Mikhail D. Lukin},
    title = {Architectural mechanisms of a universal fault-tolerant quantum computer},
    journal = {arXiv:2506.20661},
    year = 2025,
url =  	{https://doi.org/10.48550/arXiv.2506.20661},
doi =  	{10.48550/arXiv.2506.20661}
}

@article{Chiu2025,
    author = {Neng-Chun Chiu and Elias C. Trapp and Jinen Guo and Mohamed H. Abobeih and Luke M. Stewart and Simon Hollerith and Pavel Stroganov and Marcin Kalinowski and Alexandra A. Geim and Simon J. Evered and Sophie H. Li and Lisa M. Peters and Dolev Bluvstein and Tout T. Wang and Markus Greiner and Vladan Vuletić, Mikhail D. Lukin},
    title = {Continuous operation of a coherent 3,000-qubit system},
    journal = {arXiv:2506.20660},
    year = 2025,
url = {https://doi.org/10.48550/arXiv.2506.20660},
doi = {10.48550/arXiv.2506.20660}
}

@Article{PozzoliJPA22,
  author = 	 {Eugenio Pozzoli and Monika Leibscher and Mario Sigalotti and Ugo Boscain and Christiane P. Koch},
  title = 	 {Lie algebra for rotational subsystems of a driven asymmetric top},
  journal = {J. Phys. A: Math. Theor.},
  	volume = {55},
	number = {21},
	pages = {215301},
	doi = {10.1088/1751-8121/ac631d},
	url = {https://doi.org/10.1088/1751-8121/ac631d},
  year = 	 {2022},
  eprint = 	 {2110.03263},
  abstract = {We present an analytical approach to construct the Lie algebra of finite-dimensional subsystems of the driven asymmetric top rotor. Each rotational level is degenerate due to the isotropy of space, and the degeneracy increases with rotational excitation. For a given rotational excitation, we determine the nested commutators between drift and drive Hamiltonians using a graph representation. We then generate the Lie algebra for subsystems with arbitrary rotational excitation using an inductive argument.}
}

@article{Schirmer_2002,
doi = {10.1088/0305-4470/35/9/319},
url = {https://dx.doi.org/10.1088/0305-4470/35/9/319},
year = {2002},
month = {feb},
publisher = {},
volume = {35},
number = {9},
pages = {2327},
author = {S G Schirmer and I C H Pullen and A I Solomon},
title = {Identification of dynamical Lie algebras for finite-level quantum control systems},
journal = {Journal of Physics A: Mathematical and General},
abstract = {The problem of identifying the dynamical Lie algebras of finite-level quantum systems subject to external control is considered, with special emphasis on systems that are not completely controllable. In particular, it is shown that the dynamical Lie algebra for an N-level system with symmetrically coupled transitions, such as a system with equally spaced energy levels and uniform transition dipole moments, is a subalgebra of so(N) if N  =  2ℓ + 1, and a subalgebra of sp(ℓ) if N  =  2ℓ. General criteria for obtaining either so(2ℓ + 1) or sp(ℓ) are established.}
}

@article{Wang2012,
  title = {Subspace controllability of spin-$\frac{1}{2}$ chains with symmetries},
  author = {Wang, Xiaoting and Burgarth, Daniel and Schirmer, S.},
  journal = {Phys. Rev. A},
  volume = {94},
  issue = {5},
  pages = {052319},
  numpages = {11},
  year = {2016},
  month = {Nov},
  publisher = {American Physical Society},
  doi = {10.1103/PhysRevA.94.052319},
  url = {https://link.aps.org/doi/10.1103/PhysRevA.94.052319}
}

@article{muthukrishnan2000multivalued,
  title={Multivalued logic gates for quantum computation},
  author={Muthukrishnan, Ashok and Stroud Jr, Carlos R},
  journal={Physical review A},
  volume={62},
  number={5},
  pages={052309},
  year={2000},
  publisher={APS},
  doi = {10.1103/PhysRevA.62.052309},
  url = {https://link.aps.org/doi/10.1103/PhysRevA.62.052309}
}

@article{brennen2005criteria,
  title={Criteria for exact qudit universality},
  author={Brennen, Gavin K and O’Leary, Dianne P and Bullock, Stephen S},
  journal={Physical Review A—Atomic, Molecular, and Optical Physics},
  volume={71},
  number={5},
  pages={052318},
  year={2005},
  publisher={APS},
  doi = {10.1103/PhysRevA.71.052318},
  url = {https://link.aps.org/doi/10.1103/PhysRevA.71.052318}
}

@book{brylinski2002mathematics,
  title={Mathematics of quantum computation},
  author={Brylinski, Ranee K and Chen, Goong},
  year={2002},
  publisher={CRC Press},
  doi = {10.1201/9781420035377 },
  url = {https://link.aps.org/doi/10.1201/9781420035377 }
}

@article{nigmatullin_experimental_2025,
	title = {Experimental demonstration of breakeven for a compact fermionic encoding},
	url = {https://www.nature.com/articles/s41567-025-02931-8},
	doi = {10.1038/s41567-025-02931-8},
	journal = {Nature Physics},
	author = {Nigmatullin, Ramil and Hémery, Kévin and Ghanem, Khaldoon and Moses, Steven and Gresh, Dan and Siegfried, Peter and Mills, Michael and Gatterman, Thomas and Hewitt, Nathan and Granet, Etienne and Dreyer, Henrik},
	month = jun,
	year = {2025},
	pages = {1--7},
}

@article{manovitz_quantum_2025,
	title = {Quantum coarsening and collective dynamics on a programmable simulator},
	volume = {638},
	url = {https://www.nature.com/articles/s41586-024-08353-5},
	doi = {10.1038/s41586-024-08353-5},
	number = {8049},
	journal = {Nature},
	author = {Manovitz, Tom and Li, Sophie H. and Ebadi, Sepehr and Samajdar, Rhine and Geim, Alexandra A. and Evered, Simon J. and Bluvstein, Dolev and Zhou, Hengyun and Koyluoglu, Nazli Ugur and Feldmeier, Johannes and Dolgirev, Pavel E. and Maskara, Nishad and Kalinowski, Marcin and Sachdev, Subir and Huse, David A. and Greiner, Markus and Vuletić, Vladan and Lukin, Mikhail D.},
	month = feb,
	year = {2025},
	pages = {86--92},
}

@article{DeffnerReview2017,
doi = {10.1088/1751-8121/aa86c6},
url = {https://dx.doi.org/10.1088/1751-8121/aa86c6},
year = {2017},
month = {oct},
publisher = {IOP Publishing},
volume = {50},
number = {45},
pages = {453001},
author = {Sebastian Deffner and Steve Campbell},
title = {Quantum speed limits: from Heisenberg’s uncertainty principle to optimal quantum control},
journal = {Journal of Physics A: Mathematical and Theoretical}
}

@Article{GoerzNPJQI17,
  author = 	 {Michael H. Goerz and Felix Motzoi and K. Birgitta
                  Whaley and Christiane P. Koch},
  title = 	 {Charting the circuit-QED Design Landscape
                  Using Optimal Control Theory},
  journal = 	 {npj Quantum Inf.},
  year = 	 {2017},
  volume = 	 {3},
  pages = 	 {37},
  month = 	 {Sep},
  doi =          {10.1038/s41534-017-0036-0},
  url = {https://www.nature.com/articles/s41534-017-0036-0},
  note = 	 {arXiv:1606.08825}
}

@Article{GoerzJPB11,
  author={Michael H. Goerz and Tommaso Calarco and Christiane P. Koch},
  title={The quantum speed limit of optimal controlled phasegates for 
         trapped neutral atoms},
  journal= {JPB},
  volume={44},
  pages={154011},
  url={http://stacks.iop.org/0953-4075/44/i=15/a=154011},
  year={2011},
  doi={10.1088/0953-4075/44/15/154011},
  month = 	 {July},
  note = 	 {arXiv:1103.6050}
}

@article{GuentherAVSQS2023,
    author = {Günther, Stefanie and Petersson, N. Anders},
    title = {A practical approach to determine minimal quantum gate durations using amplitude-bounded quantum controls},
    journal = {AVS Quantum Science},
    volume = {5},
    number = {4},
    pages = {043802},
    year = {2023},
    month = {12},
    doi = {10.1116/5.0173373},
    url = {https://doi.org/10.1116/5.0173373}
}

@article{BasilewitschPRR2024,
  title = {Comparing planar quantum computing platforms at the quantum speed limit},
  author = {Basilewitsch, Daniel and Dlaska, Clemens and Lechner, Wolfgang},
  journal = {Phys. Rev. Res.},
  volume = {6},
  pages = {023026},
  numpages = {19},
  year = {2024},
  month = {Apr},
  publisher = {American Physical Society},
  doi = {10.1103/PhysRevResearch.6.023026},
  url = {https://link.aps.org/doi/10.1103/PhysRevResearch.6.023026}
}

@article{smith2025optimally,
  title={Optimally Generating su (2 N) Using Pauli Strings},
  author={Smith, Isaac D and Cautr{\`e}s, Maxime and Stephen, David T and Poulsen Nautrup, Hendrik},
  journal={Physical Review Letters},
  volume={134},
  number={20},
  pages={200601},
  year={2025},
  publisher={APS},
  doi = {10.1103/PhysRevLett.134.200601}
}

@article{zeier2011symmetry,
  title={Symmetry principles in quantum systems theory},
  author={Zeier, Robert and Schulte-Herbr{\"u}ggen, Thomas},
  journal={Journal of mathematical physics},
  volume={52},
  number={11},
  year={2011},
  publisher={AIP Publishing},
  doi = {10.1063/1.3657939}
}

@article{albertini2025quantum,
  title={Quantum subspace controllability implying full controllability},
  author={Albertini, Francesca and D'Alessandro, Domenico},
  journal={Linear Algebra and its Applications},
  volume={705},
  pages={207--229},
  year={2025},
  publisher={Elsevier},
  doi = {10.1016/j.laa.2024.11.002}
}

@article{wiersema2024classification,
  title={Classification of dynamical Lie algebras of 2-local spin systems on linear, circular and fully connected topologies},
  author={Wiersema, Roeland and K{\"o}kc{\"u}, Efekan and Kemper, Alexander F and Bakalov, Bojko N},
  journal={npj Quantum Information},
  volume={10},
  number={1},
  pages={110},
  year={2024},
  publisher={Nature Publishing Group UK London},
  doi = {10.1038/s41534-024-00900-2}
}

@article{cioni2024conveyor,
  title={Conveyor-belt superconducting quantum computer},
  author={Cioni, Francesco and Menta, Roberto and Aiudi, Riccardo and Polini, Marco and Giovannetti, Vittorio},
  journal={arXiv preprint arXiv:2412.11782},
  year={2024},
  doi = {10.48550/arXiv.2412.11782}
}

@article{menta2025globally,
  title={Globally driven superconducting quantum computing architecture},
  author={Menta, Roberto and Cioni, Francesco and Aiudi, Riccardo and Polini, Marco and Giovannetti, Vittorio},
  journal={Physical Review Research},
  volume={7},
  number={1},
  pages={L012065},
  year={2025},
  publisher={APS},
  doi = {10.1103/PhysRevResearch.7.L012065}
}

@article{Tan2023tunable,
  author={Tan, Xiangguan and Zhang, Yuxia},
  journal={IEEE Microwave Magazine}, 
  title={Tunable Couplers: An Overview of Recently Developed Couplers With Tunable Functions}, 
  year={2023},
  volume={24},
  number={3},
  pages={20-33},
  keywords={Radio frequency;Microwave devices;Switches;Couplers;Spread spectrum communication;Microwave communication;Integrated design;Communications technology;Tunable circuits and devices},

  doi={10.1109/MMM.2022.3226550}}

@article{moskalenko2021tunable,
  title={Tunable coupling scheme for implementing two-qubit gates on fluxonium qubits},
  author={Moskalenko, IN and Besedin, IS and Simakov, IA and Ustinov, AV},
  journal={Applied Physics Letters},
  volume={119},
  number={19},
  year={2021},
  publisher={AIP Publishing},
  doi={10.1063/5.0064800}
}

@article{miessen2024benchmarking,
  title={Benchmarking digital quantum simulations above hundreds of qubits using quantum critical dynamics},
  author={Miessen, Alexander and Egger, Daniel J and Tavernelli, Ivano and Mazzola, Guglielmo},
  journal={PRX Quantum},
  volume={5},
  number={4},
  pages={040320},
  year={2024},
  publisher={APS},
  doi={10.1103/PRXQuantum.5.040320}
}

@article{kim2023evidence,
  title={Evidence for the utility of quantum computing before fault tolerance},
  author={Kim, Youngseok and Eddins, Andrew and Anand, Sajant and Wei, Ken Xuan and Van Den Berg, Ewout and Rosenblatt, Sami and Nayfeh, Hasan and Wu, Yantao and Zaletel, Michael and Temme, Kristan and others},
  journal={Nature},
  volume={618},
  number={7965},
  pages={500--505},
  year={2023},
  publisher={Nature Publishing Group UK London},
  doi={10.1038/s41586-023-06096-3}
}

@article{farrell2024scalable,
  title = {Scalable Circuits for Preparing Ground States on Digital Quantum Computers: The Schwinger Model Vacuum on 100 Qubits},
  author = {Farrell, Roland C. and Illa, Marc and Ciavarella, Anthony N. and Savage, Martin J.},
  journal = {PRX Quantum},
  volume = {5},
  issue = {2},
  pages = {020315},
  numpages = {32},
  year = {2024},
  month = {Apr},
  publisher = {American Physical Society},
  doi = {10.1103/PRXQuantum.5.020315},
  url = {https://link.aps.org/doi/10.1103/PRXQuantum.5.020315}
}

\clearpage
\onecolumngrid
\appendix
\pagenumbering{alph}

\section{Proof of the theorem}\label{app:proof}

To prove \cref{thm:modular}, we use an analytical approach to prove controllability of the multipartite system using the Lie rank condition~\cite{dAlessandro2021}. 
This section proves that when adding any two-qubit tunable coupling between the subsystems $\mathcal{A}$ and $\mathcal{B}$, the whole system $\mathcal{A}\otimes\mathcal{B}$ is controllable. 

\subsection{Notation and problem statement}

Let $\mathfrak{A} \cong \su(2^M)$ and $\mathfrak{B} \cong \su(2^N)$ be the dynamical Lie algebras of two separate controllable qubit arrays $\mathcal{A}$ and $\mathcal{B}$, with $M$ and $N$ qubits respectively.
The elements of the algebra $\mathfrak{A}$ of the first subsystem can be represented as linear combinations of tensor products of Pauli matrices $\hat{\sigma}_{\alpha}$ (with $\alpha \in \{0,1,2,3\}$). We represent single Pauli matrices in $\mathfrak{A}$ by
\[
    \hat{\sigma}_\alpha^{(\mu)} := \mathds{1}_2\otimes ... \otimes \mathds{1}_2\! \otimes\underbrace{\hat{\sigma}_\alpha}_{\text{$\mu$-th position}}\!\otimes \mathds{1}_2 \otimes ... \mathds{1}_2, 
    \]
with $\alpha \in [0,1,2,3]$. Similarly, we represent the elements of $\mathfrak{B}$ following the same notation
\[
    \hat{\sigma}_j^{(n)} := \mathds{1}_2\otimes ... \otimes \mathds{1}_2\! \otimes\underbrace{\hat{\sigma}_j}_{\text{$n$-th position}}\!\otimes \mathds{1}_2 \otimes ... \mathds{1}_2,
    \]
with $j \in [0,1,2,3]$. To avoid confusion we use Greek letters for the indices of elements belonging to $\mathfrak{A}$ and Latin ones for those in $\mathfrak{B}$. 

Let us assume that the system $\mathcal{A}$ is described by the traceless Hamiltonian: 
\begin{equation} \label{eqn:HA}
    \hat{H}^{\mathcal{A}}(t) = \hat{A}_0 + \sum_{\rho = 1}^{C_A} u_\rho(t) \hat{A}_\rho,
\end{equation}
where $\hat{A}_0$ is the time-independent drift, $u_\rho(t)$ are the real-valued controls, $\hat{A}_\rho$ the operators to which the controls are coupled and $C_A$ the number of controls in the system. The dynamical Lie algebra of the system is given by 
\begin{equation} \label{eqn:algA}
    \mathfrak{A} := Lie\left[\{i\hat{A}_\rho\}_{\rho=0}^{C_A}\right]\, \simeq \mathfrak{su}(2^{M}).
\end{equation}
Analogously, we assume the system $\mathcal{B}$ to be described by
\begin{equation} \label{eqn:HB}
    \hat{H}^{\mathcal{B}}(t) = \hat{B}_0 + \sum_{j = 1}^{C_B} v_j(t) \hat{B}_j,
\end{equation}
with a dynamical Lie algebra 
\begin{equation} \label{eqn:algB}
    \mathfrak{B} := Lie\left[\{i\hat{B}_j\}_{j=0}^{C_B}\right]\, \simeq \mathfrak{su}(2^{N}).
\end{equation}

Connecting two qubits from $\mathcal{A}$ and $\mathcal{B}$ by an entangling two-qubit coupling $w(t) \hat{H}^{\mu, n}_c$ (cf. \cref{eqn:2qubit}) that  can be treated as an independent control, the Hamiltonian of the $\mathcal{A}\otimes\mathcal{B}$ system is given by
\begin{align}\label{eqn:HAB}
    \hat{H}^{\mathcal{AB}} (t)=& \hat{H}^{\mathcal{A}} (t) \otimes \mathds{1}_{2^N} + \mathds{1}_{2^M} \otimes \hat{H}^{\mathcal{B}} (t) +  w(t)\, \hat{H}^{\mu, n}_c \nonumber \\
    =& \hat{A}_0 \otimes \mathds{1}_{2^N} + \mathds{1}_{2^M} \otimes \hat{B}_0 + \sum_{\rho = 1}^{C_A} u_\rho(t) \hat{A}_\rho \otimes \mathds{1}_{2^N}   + \sum_{j = 1}^{C_B} v_j(t)  \mathds{1}_{2^M} \otimes \hat{B}_j   + w(t) \hat{H}^{\mu, n}_c\,. 
\end{align}
Note that the drift of $ \hat{H}^{\mathcal{AB}}$,
 $   \hat{H}^{\mathcal{AB}}_0  = \hat{A}_0 \otimes \mathds{1}_{2^N} + \mathds{1}_{2^M} \otimes \hat{B}_0$,
encompasses the contributions of the drifts of $\mathcal{A}$ and $\mathcal{B}$. The dynamical Lie algebra of the bipartite system is given by
\begin{equation}\label{eqn:final_algebra}
    \mathcal{L}_{\mathcal{A}\otimes \mathcal{B}} := Lie \left[ i\hat{H}^{\mathcal{AB}}_0,\;  
    \{i\hat{A}_\rho\otimes \mathds{1}_{2^N}\}_{\rho=1}^{C_A},\; 
    \{i \mathds{1}_{2^M} \otimes\hat{B}_j\}_{j=1}^{C_B},  \; 
    i \hat{H}^{\mu, n}_c\right], 
\end{equation}
which is contained in $\mathcal{L}_{\mathcal{A}\otimes \mathcal{B}} \subseteq \mathfrak{su}(2^{M+N})$. The different operators in \cref{eqn:final_algebra}
include, in order of appearance, the drift of the total system, the local controls on the partition $\mathcal{A}$, the local controls on the partition $\mathcal{B}$, and the entangling control. 
Here we prove that for two operator controllable systems $\mathcal{A}$ and $\mathcal{B}$, described by \cref{eqn:HA} and \cref{eqn:HB}, respectively, the bipartite system $\mathcal{A}\otimes\mathcal{B}$ with the two-qubit control  $w(t) \hat{H}^{\mu, n}_c(t)$ is also operator controllable. Mathematically, this is equivalent to 
\begin{equation}\label{eqn:final_thm}
    \begin{cases}
        \mathfrak{A} = \mathfrak{su}(2^M) \\
        \mathfrak{B} = \mathfrak{su}(2^N)
    \end{cases} \quad \Rightarrow \quad \mathcal{L}_{\mathcal{A}\otimes \mathcal{B}} = \mathfrak{su}(2^{M+N}) 
\end{equation}
for every $\hat{H}^{\mu, n}_c$ as in \cref{eqn:2qubit}. 

\subsection{Operations with tensor products of Pauli matrices}\label{ssec:localop}

Most of the needed calculations to generate the dynamical Lie algebra of the total system described by \cref{eqn:final_algebra} involve commutators of the entangling control $\hat{H}^{\mu, n}_c$ with local operators of either subsystem. It is therefore important to understand what elements can be generated using only linear combinations of commutators of local operators. Controllability of the total  system corresponds to obtaining every element of the total Lie algebra. To ensure this, it is enough to show that a complete basis of the Lie algebra can be generated. For the case of the maximal algebra of $\mathcal{A} \otimes \mathcal{B}$, $\su(2^{N+M})$, a possible basis is  
\begin{equation}\label{eqn:basis}
   \su(2^{N+M}) = \text{span} \, \left\{i\, \hat{\sigma}_{\alpha_1}\otimes\hat{\sigma}_{\alpha_2}\otimes\cdots \otimes\hat{\sigma}_{\alpha_M}\otimes\hat{\sigma}_{j_1}\otimes\hat{\sigma}_{j_2}\otimes \cdots \otimes\hat{\sigma}_{j_N} \right\}_{j_k, \alpha_\beta = 0}^{3}
\end{equation}
with at least one nonzero subindex $j_k$ or $\alpha_\beta$ in every element. All of the elements from the basis in \cref{eqn:basis} present the same structure with different subindices. If we start from an element in the basis and define operations to change the Pauli indices to other possible values, we can generate a basis of $\su(2^{N+M})$ and prove controllability. 

There are three qualitatively different possible index changes: Changing a nonzero index to a nonzero index, turning a nonzero index to zero or setting a zero index to a nonzero index. 
Here we introduce the associated three operations using local commutators on the $\mathfrak{B}$ subsystem. From here on, all subscripts $b$ of the Pauli operators $\hat{\sigma}^{a}_b$ are assumed to be nonzero unless explicitly stated otherwise. For simplicity, we also use a cyclic notation in the Pauli indices $\{1,2,3\}$, e.g. $\hat{\sigma}_{3+1} = \hat{\sigma}_{4}:= \hat{\sigma}_{1}$.
We start with an operation that allows us to change an index $j\in\{1,2,3\}$ of a Pauli matrix $\hat{\sigma}_{j}^{(n)}$ to $j+1$ for tensor products of Pauli matrices in the subsystem $\mathcal{B}$: 
\begin{subequations}\label{eqn:fs}
    \begin{equation}\label{eqn:fcyc}
f_{cyc}^{(n)}\left(i\hat{\sigma}_{j}^{(n)}\right) := -\frac{1}{2}[i  \hat{\sigma}_{j+2}^{(n)},\; i \hat{\sigma}_{j}^{(n)}  ] =   i \hat{\sigma}_{j+1}^{(n)}.
    \end{equation}
Note that $f_{cyc}^{(n)}(\cdot)$ is defined using only commutators of elements in the Lie algebra $\mathfrak{B}$. Analogous operations can be defined for other values of $1\leq n \leq N$. With it we can cycle through all the nonzero indices of a Pauli matrix. This operation has some interesting properties. First, if we have a skew-Hermitian operator $i\hat{\sigma}_{j}^{(n)}\hat{\sigma}_{j}^{(m)}$ with $m\neq n$, then $f_{cyc}^{(n)}(i\hat{\sigma}_{j}^{(n)}\hat{\sigma}_{k}^{(m)}) := f_{cyc}^{(n)}(i\hat{\sigma}_{j}^{(n)})\hat{\sigma}_{k}^{(m)} $. In other words, if we have a tensor product of Pauli matrices, $f_{cyc}^{(n)}$ does not change the indices of Pauli matrices in a position $m\neq n$.
Second, for Hermitian operators $\hat{A}$ acting on the first system $\mathcal{A}$, the definition can be naturally extended to $f_{cyc}^{(n)}(i\hat{A} \otimes\hat{\sigma}_{j}^{(n)}) := \hat{A} \otimes f_{cyc}^{(n)}(i\hat{\sigma}_{j}^{(n)}) $. This operation will be relevant to compute the Lie algebra of the multipartite system. If $i\hat{A} \otimes\hat{\sigma}_{j}^{(n)} \in \mathcal{L}_{\mathcal{A}\otimes \mathcal{B}}$ and for the local operator $i\mathds{1}_{2^M}\otimes\hat{\sigma}_{j+2}^{(n)}\in \mathcal{L}_{\mathcal{A}\otimes \mathcal{B}}$, then $\hat{A} \otimes f_{cyc}^{(n)}(i\hat{\sigma}_{j}^{(n)}) \in \mathcal{L}_{\mathcal{A}\otimes \mathcal{B}}$. The second hypothesis is trivially fulfilled if $\mathds{1}_{2^M}\otimes \su(N) \in \mathcal{L}_{\mathcal{A}\otimes \mathcal{B}}$, which is something that still has to be formally proven.

For the second operation we define a function using commutators of operators in $\mathcal{B}$ to turn a Pauli index from zero to $j>0$ at a position $n$. To achieve this it is imperative to have at least another Pauli matrix $\hat{\sigma}_{k}^{(m)}$ at a position $m \neq n$. The operation is given by
\begin{equation}\label{eqn:fgen}
 f_{gen \, k}^{(n, m)}\left( i \hat{\sigma}_{0}^{(n)}\hat{\sigma}_{k}^{(m)} \right)   := -\frac{1}{4} \left[i \hat{\sigma}_{0}^{(n)}\hat{\sigma}_{k+1}^{(m)},\;\left[i  \hat{\sigma}_{j}^{(n)}\hat{\sigma}_{k+1}^{(m)},\; i \hat{\sigma}_{0}^{(n)}\hat{\sigma}_{k}^{(m)} \right]\right] =  i \hat{\sigma}_{j}^{(n)}\hat{\sigma}_{k}^{(m)}.
\end{equation}
Analogous operations can be defined for other values of $1\leq n,m \leq N$ and $j\in\{1,2,3\}$. Similarly as before, the function $f_{gen \, k}^{(n, m)}$ can be extended to other operators $i\mathcal{A}\otimes\hat{\sigma}_{0}^{(n)}\hat{\sigma}_{k}^{(m)}$ acting on the total Hilbert space. By definition, this operation is not affected by any other Pauli matrices at any position other than $m$ and $n$. 

Finally, the third operation removes a Pauli matrix from the tensor product, i.e. it takes an element $\hat{\sigma}_{j}^{(n)}$ and sets it to $\hat{\sigma}_{0}^{(n)}$. To do so it is necessary to have another Pauli matrix $\hat{\sigma}_{k}^{(m)}$ in the tensor product. The operation is defined as
\begin{equation}\label{eqn:frem}
   f_{rem}^{(n, m)}\left(i \hat{\sigma}_{j}^{(n)}\hat{\sigma}_{k}^{(m)}\right)  := -\frac{1}{4}  \left[i \hat{\sigma}_{0}^{(n)}\hat{\sigma}_{k+1}^{(m)},\;\left[i \hat{\sigma}_{j}^{(n)}\hat{\sigma}_{k+1}^{(m)} , \;  i \hat{\sigma}_{j}^{(n)}\hat{\sigma}_{k}^{(m)}\right]\right] =   \hat{\sigma}_{0}^{(n)}\hat{\sigma}_{k}^{(m)},
\end{equation}
\end{subequations}
where similar functions can be defined for other positions $n, m$ and indices $j,k$. The same properties described for the operations $f_{cyc}^{(n)}$ and $f_{gen \, k}^{(n, m)}$ apply to $f_{rem}^{(n, m)}$. 

We can define operations $f_{cyc}^{(\rho)}$, $f_{gen \, k}^{(\rho, \tau)}$ and $f_{rem}^{(\rho, \tau)}$ analogous to 
\cref{eqn:fs} acting on the subsystem $\mathcal{A}$ instead of the subsystem $\mathcal{B}$. With this set of operations we can transform any Pauli matrix tensor product (cf. \cref{eqn:basis}) with at least one Pauli matrix acting on $\mathcal{A}$ and one acting on $\mathcal{B}$ and transform it into any other tensor product of Pauli matrices that is entangling between $\mathcal{A}$ and $\mathcal{B}$. In other words, if we have access to all the different local operations then we can transform any non-local tensor product of Pauli matrices into any other non-local tensor product of Pauli matrices. This will be key in turning the entangling tunable coupling into other elements in the total Lie algebra by only using local operations, i.e., extending local controllability of the subsystems to the total system $\mathcal{A} \otimes \mathcal{B}$. 

\subsection{Preliminaries}

In this subsection we present and prove a collection of lemmas that are necessary for the final theorem. Together, they focus on proving that the dynamical Lie algebras $\mathfrak{A} \simeq \su(2^M)$, $ \su(2^M) \simeq \mathfrak{B} $ of the initial subsystems belong to the dynamical Lie algebra $\mathcal{L}_{\mathcal{A}\otimes \mathcal{B}}$ of the total system, i.e. that $\mathfrak{A}\otimes \mathds{1}_{2^N} \subseteq \mathcal{L}_{\mathcal{A}\otimes \mathcal{B}}$, and $\mathds{1}_{2^M}\otimes\mathfrak{B} \subseteq \mathcal{L}_{\mathcal{A}\otimes \mathcal{B}}$. 

Bear in mind that this result is not immediate. \cref{eqn:algA} (respectively \cref{eqn:algB}) shows that all elements in $\{i\hat{A}_\rho\}_{\rho=0}^{C_A}$ (resp. $\{i\hat{B}_\rho\}_{j=0}^{C_B}$) are necessary to generate the Lie algebra $\mathfrak{A}$ (resp. $\mathfrak{B}$) in the general case. Looking at the elements that generate $\mathcal{L}_{\mathcal{A}\otimes \mathcal{B}}$ in \cref{eqn:final_algebra}, one can see that neither $i\hat{A}_0$ nor $i\hat{B}_0$ appear directly among the elements. They appear, however, in the form of $\hat{H}_0^{\mathcal{AB}}$. Indeed they are both time independent elements, which means that a priori we cannot control them independently. Therefore it is necessary to prove that, by using commutators with other elements and their linear combinations, it is possible to generate the algebras $\mathfrak{A} \otimes \mathds{1}_{2^N}$ and $\mathds{1}_{2^M} \otimes \mathfrak{B}$. 

First, we present two lemmas that combined show that $\mathfrak{A} \otimes \mathds{1}_{2^N}$ can be generated using the elements in \cref{eqn:final_algebra}. Intuitively it should be possible to separate them, as they act on different subspaces. Here that suspicion is formally proven.

\begin{lemma} \label{lemma:lin_dep}
    Let $\{\mathbf{v}_i\}_{i=0}^m$ be a set of vectors in a vector space $V_A$. Let $\mathbf{w} \in V_B$ be another vector on a vector space $V_B$ such that $V_A \cap V_B = \{\mathbf{0}\}$. If the set $\{\mathbf{v}_i\}_{i=1}^m$ is linearly independent, then the set $\{\mathbf{v}_0+\mathbf{w}\}\cup\{\mathbf{v}_i\}_{i=1}^m$ is also linearly independent. 
\end{lemma}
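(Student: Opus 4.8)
The plan is to argue directly from the definition of linear independence, using that $\mathbf{v}_0+\mathbf{w}$ is the only one of the listed vectors with a nonzero component in $V_B$, while all the others lie entirely in $V_A$. The decomposition $V_A\cap V_B=\{\mathbf{0}\}$ will let me peel off that $V_B$-component first.

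First I would suppose scalars $c_0,c_1,\dots,c_m$ satisfy
\[
  c_0\,(\mathbf{v}_0+\mathbf{w})+\sum_{i=1}^m c_i\,\mathbf{v}_i=\mathbf{0},
\]
and rewrite this as $c_0\,\mathbf{w}=-\bigl(c_0\,\mathbf{v}_0+\sum_{i=1}^m c_i\,\mathbf{v}_i\bigr)$. The left-hand side lies in $V_B$ and the right-hand side lies in $V_A$, so both sides belong to $V_A\cap V_B=\{\mathbf{0}\}$ and hence vanish. Since $\mathbf{w}\neq\mathbf{0}$ (a zero $\mathbf{w}$ would make the claim false, and in the application $\mathbf{w}$ is a genuine nonzero drift term $\mathds{1}_{2^M}\otimes\hat{B}_0$), the equation $c_0\,\mathbf{w}=\mathbf{0}$ forces $c_0=0$. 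Then the original identity collapses to $\sum_{i=1}^m c_i\,\mathbf{v}_i=\mathbf{0}$, and the assumed linear independence of $\{\mathbf{v}_i\}_{i=1}^m$ gives $c_1=\dots=c_m=0$, so only the trivial combination vanishes. One should also note that the $m+1$ vectors are genuinely distinct: if $\mathbf{v}_0+\mathbf{w}$ lay in $V_A$ (in particular, if it equalled some $\mathbf{v}_j$), then $\mathbf{w}=(\mathbf{v}_0+\mathbf{w})-\mathbf{v}_0\in V_A\cap V_B=\{\mathbf{0}\}$, contradicting $\mathbf{w}\neq\mathbf{0}$.

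There is no real obstacle here; the statement is elementary linear algebra. The only point deserving a moment's care is the implicit nondegeneracy hypothesis $\mathbf{w}\neq\mathbf{0}$, which is exactly what stops $c_0\,\mathbf{w}=\mathbf{0}$ from admitting a nonzero $c_0$ and what guarantees $\mathbf{v}_0+\mathbf{w}$ cannot be absorbed into $\operatorname{span}\{\mathbf{v}_i\}_{i=1}^m$. In the intended use $V_A$ is the space of operators of the form $\hat{X}\otimes\mathds{1}_{2^N}$ and $V_B$ the space of operators $\mathds{1}_{2^M}\otimes\hat{Y}$ with $\hat{Y}$ traceless, so that $V_A\cap V_B=\{\mathbf{0}\}$, and $\mathbf{v}_0+\mathbf{w}$ plays the role of the composite drift $\hat{H}^{\mathcal{AB}}_0$ while the $\mathbf{v}_i$ are the local controls $i\hat{A}_\rho\otimes\mathds{1}_{2^N}$; the lemma then lets one treat the drift as if its two summands were available separately.
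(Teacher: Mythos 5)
Your proof is correct and follows essentially the same route as the paper's: assume a vanishing linear combination, isolate the coefficient $c_0$ of $\mathbf{v}_0+\mathbf{w}$ using the fact that $\mathbf{w}$ cannot be cancelled by anything in $V_A$, and then invoke the independence of $\{\mathbf{v}_i\}_{i=1}^m$. If anything, your version is slightly tighter, since you argue from $V_A\cap V_B=\{\mathbf{0}\}$ as actually hypothesized (rather than the paper's appeal to $\mathbf{w}\perp\mathbf{v}_i$, which presumes an inner product) and you make explicit the nondegeneracy condition $\mathbf{w}\neq\mathbf{0}$ that the statement needs and that holds in the intended application.
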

\begin{proof}
To prove that the second set is linearly independent we simply must show that for the following equality to hold,
\begin{equation*}
    c_0 \left(\mathbf{v}_0+\mathbf{w}\right) + \sum_{i=1}^m c_i \mathbf{v}_i = 0, 
\end{equation*}
all the coefficients $c_j$ (with $0\leq j \leq m$) must be zero. Since $\mathbf{w} \perp \mathbf{v}_i \, \forall j\in \{1, 2, ..., m\}$ it is evident that $c_0 = 0$. This leaves us with the equation
\begin{equation*}
    \sum_{i=1}^m c_i \mathbf{v}_i = 0.
\end{equation*}
If the set $\{\mathbf{v}_i\}_{i=1}^m$ is linearly independent, then $c_i = 0 \, \forall j\in \{1, 2, ..., m\}$, which means that the elements in the set $\{\mathbf{v}_0+\mathbf{w}\}\cup\{\mathbf{v}_i\}_{i=1}^m$ are also linearly independent. 
\end{proof}

The next lemma includes a mention to the generation of a vector basis of a Lie algebra $\mathfrak{A}$ based on a set of elements $\{A_1, A_2, ... A_n\}$. Here is a short description of one of the possible methods, described in Chapter 3 of \cite{dAlessandro2021}:

\begin{enumerate}
    \item Orthonormalise the elements of the initial set $\{A_1, A_2, ... A_n\}$ into a maximal set of linearly independent orthonormal elements $\{\Tilde{A}_1, \Tilde{A}_2, ... \Tilde{A}_m\}$ (e.g. by running the Gram-Schmidt algorithm). 
    \item Define $\{\Tilde{A}_1, \Tilde{A}_2, ... \Tilde{A}_m\}$ as the elements of depth 0.
    \item $p \leftarrow 1$.
    \item Iterate over the next steps: 
    \begin{enumerate}
        \item Compute the Lie brackets $C_{i,j} := [\Tilde{B}_j, \Tilde{A}_i]$ where $\Tilde{A}_i$ are the elements of depth 0 and $\Tilde{B}_j$ the elements of depth $p-1$. The vectors $C_{i,j}$ are potential candidates for elements of depth $p$. 
        \item Orthonormalise the set of vectors $\{C_{i,j}\}$ with respect to the set of elements of depth less or equal than $p-1$. 
        \item The new nonzero vectors $\Tilde{C}_k$ are considered the elements of depth $p$ from now on.
        \item Stop the algorithm if the set of elements of depth $p$ is empty (i.e. no new linearly independent vector was found through Lie brackets).
        \item $p \leftarrow p+1$
    \end{enumerate}
\end{enumerate}

\fix{The following lemma shows that the drifts of the two subsystems used can always be decoupled. This is not straightforward, as the only control that we have over a drift is to let it evolve over time. When dealing with drifts of different subsystems, we can still prove that the two terms can always be decoupled in the algebra. }

\begin{lemma}\label{lemma:AplusB}
    Let $\{i\hat{A}_\rho\}_{\rho=0}^{C_A} \subset \su(2^M)$ be a set of linearly independent traceless skew-Hermitian operators that generate the algebra $\mathfrak{A} :=  Lie\left[\{i\hat{A}_\rho\}_{\rho=0}^{C_A}\right]$. Let $i\hat{B}_0 \in \su(2^N)$ be a nonzero operator such that $[i\hat{A}_\rho\otimes \mathds{1}_{2^N} \,,\, i\mathds{1}_{2^M} \otimes \hat{B}_0] =0$ for every $0\leq\rho\leq C_A$. Then the following implication holds true: 

\begin{equation}
\begin{aligned}
\text{If } \mathfrak{A}\, \cong \,\mathfrak{su}(2^{M}) \text{ and } \mathfrak{A}' &:=   Lie\left[i\hat{A}_0\otimes \mathds{1}_{2^N}  + i\mathds{1}_{2^M} \otimes\hat{B}_0, \{i\hat{A}_\rho\otimes \mathds{1}_{2^N} \}_{\rho=1}^{C_A} \right]\,
         \\ \text{then } \mathfrak{A}' \, &\cong \,\mathfrak{su}(2^{M})\otimes \mathds{1}_{2^N}  \bigoplus i\mathds{1}_{2^M} \otimes\hat{B}_0.
\end{aligned}    
\end{equation}
\end{lemma}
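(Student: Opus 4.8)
The plan is to show that the generated algebra $\mathfrak{A}'$ is precisely $\su(2^M)\otimes\mathds{1}_{2^N}\oplus i\mathds{1}_{2^M}\otimes\hat{B}_0$ by a two-way inclusion argument, working directly with the depth-based generation procedure recalled just above. The ``easy'' direction is that $\mathfrak{A}'\subseteq \su(2^M)\otimes\mathds{1}_{2^N}\oplus i\mathds{1}_{2^M}\otimes\hat{B}_0$: the right-hand side is visibly closed under the Lie bracket (it is the direct sum of an ideal-like piece $\su(2^M)\otimes\mathds{1}$ with the central element $i\mathds{1}\otimes\hat{B}_0$, which commutes with everything in $\mathfrak{A}$ by hypothesis), and it contains all the generators $i\hat{A}_0\otimes\mathds{1}+i\mathds{1}\otimes\hat{B}_0$ and $i\hat{A}_\rho\otimes\mathds{1}$; hence it contains the Lie algebra they generate. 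So the content is in the reverse inclusion.

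For the reverse inclusion I would run the generation algorithm and track how brackets behave. The key structural observation is that whenever one bracket a generator against $i\hat{A}_\rho\otimes\mathds{1}$ ($\rho\ge1$), the central term $i\mathds{1}\otimes\hat{B}_0$ drops out, because $[i\hat{A}_\rho\otimes\mathds{1},\,i\mathds{1}\otimes\hat{B}_0]=0$. Concretely, $[i\hat{A}_\rho\otimes\mathds{1},\;i\hat{A}_0\otimes\mathds{1}+i\mathds{1}\otimes\hat{B}_0]=[i\hat{A}_\rho,i\hat{A}_0]\otimes\mathds{1}$, which is an element of $\mathfrak{A}\otimes\mathds{1}$. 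Proceeding inductively on the depth, every iterated bracket involving at least one of the $i\hat{A}_\rho\otimes\mathds{1}$ produces an element of $\su(2^M)\otimes\mathds{1}$; since $\mathfrak{A}\cong\su(2^M)$ is generated by $\{i\hat{A}_\rho\}_{\rho=0}^{C_A}$ and $i\hat{A}_0$ can be recovered inside the generated algebra (see below), we get all of $\su(2^M)\otimes\mathds{1}_{2^N}\subseteq\mathfrak{A}'$. Once $\su(2^M)\otimes\mathds{1}\subseteq\mathfrak{A}'$, in particular $i\hat{A}_0\otimes\mathds{1}\in\mathfrak{A}'$, and subtracting it from the generator $i\hat{A}_0\otimes\mathds{1}+i\mathds{1}\otimes\hat{B}_0$ yields $i\mathds{1}_{2^M}\otimes\hat{B}_0\in\mathfrak{A}'$. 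Combining the two inclusions gives the claim, and the direct-sum (rather than merely sum) structure follows since $\su(2^M)\otimes\mathds{1}_{2^N}$ and $i\mathds{1}_{2^M}\otimes\hat{B}_0$ intersect trivially (a traceless operator tensored with identity equals identity tensored with something only if both are zero, using $\hat{B}_0\neq0$).

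The one subtlety — the step I expect to be the main obstacle — is recovering $i\hat{A}_0\otimes\mathds{1}$ inside $\mathfrak{A}'$, since $i\hat{A}_0\otimes\mathds{1}$ is \emph{not} among the generators, only the combined drift $i\hat{A}_0\otimes\mathds{1}+i\mathds{1}\otimes\hat{B}_0$ is. The argument is: let $\mathfrak{g}\subseteq\su(2^M)$ be the Lie subalgebra generated by $\{i\hat{A}_\rho\}_{\rho=1}^{C_A}$ alone (i.e., dropping the drift). If $\mathfrak{g}=\su(2^M)$ already, then $i\hat{A}_0\otimes\mathds{1}\in\mathfrak{g}\otimes\mathds{1}\subseteq\mathfrak{A}'$ and we subtract to extract $i\mathds{1}\otimes\hat{B}_0$, done. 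If $\mathfrak{g}\subsetneq\su(2^M)$, then — since $\mathfrak{A}=\mathrm{Lie}[\{i\hat{A}_\rho\}_{\rho=0}^{C_A}]=\su(2^M)$ needs $i\hat{A}_0$ — the element $i\hat{A}_0$ together with $\mathfrak g$ generates strictly more than $\mathfrak g$. One then argues that the generated algebra $\mathfrak{A}'$ must contain $\mathfrak g\otimes\mathds{1}$ (from brackets among the $i\hat{A}_\rho\otimes\mathds{1}$, $\rho\ge1$) plus the coset element $i\hat A_0\otimes\mathds 1 + i\mathds 1\otimes\hat B_0$; iterating brackets of the latter with elements of $\mathfrak g\otimes\mathds 1$ keeps peeling off $\su(2^M)\otimes\mathds 1$ contributions while the central $i\mathds 1\otimes\hat B_0$ part only survives in components that commute with all of $\mathfrak g$. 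A clean way to formalize this is to use Lemma~\ref{lemma:lin_dep}: at the level of the quotient vector space $\su(2^M)\otimes\mathds 1 \,\oplus\, i\mathds 1\otimes\hat B_0$ modulo $\su(2^M)\otimes\mathds 1$, the algorithm's orthogonalization step cannot kill the central direction, so after finitely many depths the generated algebra has dimension exactly $\dim\su(2^M)+1$ — forcing $\mathfrak{A}'=\su(2^M)\otimes\mathds 1_{2^N}\oplus i\mathds 1_{2^M}\otimes\hat B_0$. I would check carefully that the drift's central component does not prevent the $\su(2^M)$ part from being filled out, which is exactly what the commutation hypothesis $[i\hat A_\rho\otimes\mathds 1,\,i\mathds 1\otimes\hat B_0]=0$ guarantees.
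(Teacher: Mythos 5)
Your setup is right: the easy inclusion $\mathfrak{A}'\subseteq \su(2^M)\otimes\mathds{1}_{2^N}\oplus i\mathds{1}_{2^M}\otimes\hat{B}_0$ is fine, you correctly observe that brackets against the $i\hat{A}_\rho\otimes\mathds{1}_{2^N}$ (or against anything in $\su(2^M)\otimes\mathds{1}_{2^N}$) annihilate the central term, and you correctly identify the crux: extracting $i\hat{A}_0\otimes\mathds{1}_{2^N}$ (equivalently $i\mathds{1}_{2^M}\otimes\hat{B}_0$) from the combined drift when $\mathfrak{g}:=Lie[\{i\hat{A}_\rho\}_{\rho=1}^{C_A}]\subsetneq\su(2^M)$. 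But your resolution of that case is where the gap sits. The statement that ``the orthogonalization step cannot kill the central direction'' only reproduces the lower bound $\dim\mathfrak{A}'\geq 2^{2M}-1$ (this is what \cref{lemma:lin_dep} gives); it does not rule out $\dim\mathfrak{A}'=2^{2M}-1$. Concretely, the a priori possible obstruction is that $\mathfrak{A}'$ could be a ``graph'' subalgebra $\{X\otimes\mathds{1}+\lambda(X)\, i\mathds{1}\otimes\hat{B}_0 : X\in\su(2^M)\}$ for a linear functional $\lambda$ with $\lambda(i\hat{A}_0)=1$: such a set contains all your generators, surjects onto $\su(2^M)$ under the projection that forgets the central part, yet contains neither $i\hat{A}_0\otimes\mathds{1}$ nor $i\mathds{1}\otimes\hat{B}_0$ individually. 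Nothing in your argument excludes this, and indeed it cannot be excluded by bracket bookkeeping alone: if $\mathfrak{A}$ were, say, abelian, the conclusion of the lemma would be false (the generated algebra would be exactly such a graph). So any correct proof must use a structural property of $\su(2^M)$, and your proposal never invokes one.

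The missing ingredient is simplicity (or just perfectness) of $\su(2^M)$. The graph above is closed under brackets only if $\lambda$ vanishes on $[\su(2^M),\su(2^M)]=\su(2^M)$, forcing $\lambda=0$ and contradicting $\lambda(i\hat{A}_0)=1$; equivalently, $[\mathfrak{A}',\mathfrak{A}']$ lies in $\su(2^M)\otimes\mathds{1}$ but projects onto all of $[\su(2^M),\su(2^M)]=\su(2^M)$, so $\su(2^M)\otimes\mathds{1}\subseteq\mathfrak{A}'$ and the central element follows by subtraction. The paper closes the same gap differently but with the same ingredient: it shows that $\dim\mathfrak{A}'=2^{2M}-1$ would make the span of $\{\hat{G}_\rho\}_{\rho=1}^{2^{2M}-2}$ a codimension-one ideal of $\su(2^M)$, contradicting simplicity. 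Your proof becomes correct once you insert this perfectness/simplicity step in place of the hand-wavy ``peeling off'' sentence; as written, it asserts the conclusion at exactly the point where the real work is needed.
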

\begin{proof}
Assume that the hypothesis $\mathfrak{A} \cong \su (2^M)$ is true. Let $\{\hat{G}_\rho\}_{\rho=0}^{2^{2M}-2}$ be the basis of $\mathfrak{A}$ spanned by $\{i\hat{A}_\rho\}_{\rho=0}^{C_A}$ generated by the previously described procedure. 
We are going to show that $\hat{G}'_\rho :=  \hat{G}_\rho \otimes \mathds{1}_{2^N}  \in \mathfrak{A}'$ for every $1\leq \rho \leq 2^{2M}-2$. Since $[i\hat{A}_\rho\otimes \mathds{1}_{2^N} , i\mathds{1}_{2^M} \otimes\hat{B}_0] =0$ for every $\rho$, for every $\hat{G}_\rho \in \mathfrak{A}$ of depth $p \geq 1$ we can generate the associated $\hat{G}'_\rho \in \mathfrak{A}'$.
Additionally, for every element $\hat{G}_\rho$ of depth $p=0$ and index $\rho >1 $, $\hat{G}'_\rho = \hat{A}_\rho\otimes \mathds{1}_{2^N}  \in \mathfrak{A}'$ by definition. 
By virtue of Lemma \ref{lemma:lin_dep}, since $\{\hat{G}'_\rho\}_{\rho=0}^{2^{2M}-2}$ is a linearly independent set, the set $\{i\hat{A}_0\otimes \mathds{1}_{2^N}  + i\mathds{1}_{2^M} \otimes\hat{B}_0\}\cup \{\hat{G}_\rho\}_{\rho=1}^{2^{2M}-2} \subset \mathfrak{A}'$ must me linearly independent as well.

We have found a $(2^{2M} -1)$-dimensional set of linearly independent terms in $\mathfrak{A}'$. Given that $\mathfrak{A}' \subseteq \mathfrak{su}(2^{M})\otimes \mathds{1}_{2^N}  \bigoplus i\mathds{1}_{2^M} \otimes\hat{B}_0$, then $2^{2M} -1 \leq \dim \left( \mathfrak{A}'\right) \leq 2^{2M} $. 

Let us assume that $\dim \left( \mathfrak{A}'\right) = 2^{2M} -1$ to see that we reach a contradiction. The algebra
\begin{equation}
    \mathfrak{G} := Lie\left[\{\hat{G}_\rho\}_{\rho=1}^{2^{2M}-2}\right]
\end{equation}
has dimension $\dim(\mathfrak{G}) =  2^{2M} -2$. By definition of $\hat{G}_\rho$, the set $\{i\hat{A}_0, \hat{G}_1, ... \hat{G}_{2^{2M}-2}\}$ is a basis of $\mathfrak{A} \cong \su (2^M)$. 
If $\dim \left( \mathfrak{A}'\right) = 2^{2M} -1$ then 
\begin{equation}
    [i\hat{A}_0\otimes \mathds{1}_{2^N} + i\mathds{1}_{2^M} \otimes\hat{B}_0, \mathfrak{G}\otimes \mathds{1}_{2^N} ] = [i\hat{A}_0\otimes \mathds{1}_{2^N} , \mathfrak{G}\otimes \mathds{1}_{2^N} ]  \subseteq \mathfrak{G}\otimes \mathds{1}_{2^N} ,
\end{equation}
which makes $\mathfrak{G}$ an ideal of $\su(2^M)$ of dimension $2^{2M}-2$. As a reminder, an ideal in a Lie algebra $\mathcal{L}$ is a vector subspace $\mathcal{I}$ so that $[\mathcal{L}, \mathcal{I}] \subseteq \mathcal{I}$. But this is not possible, since the special unitary algebras $\su(n)$ are simple (i.e. they do not contain any non-trivial ideal). Thus $\dim(\mathfrak{G}) =  2^{2M} -1$ and $\dim \left( \mathfrak{A}'\right) = 2^{2M}$, proving the implication of this lemma. 
\end{proof}

The last needed lemma shows how to generate the total algebra $\mathcal{L}_{\mathcal{A}\otimes \mathcal{B}}$ using the entangling control $\hat{H}_c^{\mu, n}$ and the two local dynamical Lie algebras. In other words, this lemma presents a method to obtain all the remaining entangling elements in the total algebra using only linear combinations of commutators. 

\begin{lemma}\label{lemma:entangling_algebras}
    Let $\mathfrak{A} \cong \su(2^M)$ and $\mathfrak{B} \cong \su(2^N)$. Let
    \begin{equation}
         \hat{H}_{c}^{\mu, n} := \sum_{\alpha, j \in [1,2,3]} c_{\alpha, j} \hat{\sigma}_{\alpha}^{(\mu)} \otimes \hat{\sigma}_{j}^{(n)} \quad \in \mathfrak{A} \otimes \mathfrak{B}
    \end{equation}
    be a nonzero Hermitian operator with $1 \leq \mu \leq M$, $1 \leq n \leq N$, $c_{\alpha, j} \in \mathbb{R}$, $i\hat{\sigma}_{\alpha}^{(\mu)} \in \mathfrak{A}$ and $i\hat{\sigma}_{j}^{(n)} \in \mathfrak{B}$. Then 
    \begin{equation}\label{eqn:lemma_entalg}
       \mathcal{L} \,:=\,  Lie\left[\mathfrak{A}\otimes\mathds{1}_{2^N},\; \mathds{1}_{2^M}\otimes\mathfrak{B},\; i\hat{H}_{c}^{\mu, n} \right]\, \cong \,\mathfrak{su}(2^{M+N}) 
    \end{equation}

\end{lemma}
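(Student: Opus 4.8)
The plan is to show that the entangling control $i\hat{H}_c^{\mu,n}$, together with the full local algebras $\mathfrak{A}\otimes\mathds{1}_{2^N}$ and $\mathds{1}_{2^M}\otimes\mathfrak{B}$, generates a single entangling Pauli string, and then to bootstrap from that one string to a complete basis of $\su(2^{M+N})$ using the local index-manipulation operations $f_{cyc}$, $f_{gen}$, $f_{rem}$ defined in \cref{ssec:localop}. First I would isolate a single term from the sum defining $\hat{H}_c^{\mu,n}$: pick a pair $(\alpha,j)$ with $c_{\alpha,j}\neq 0$. Since $i\hat{\sigma}_\beta^{(\mu)}\in\mathfrak{A}$ for every $\beta$ and $i\hat{\sigma}_k^{(n)}\in\mathfrak{B}$ for every $k$, I can hit $i\hat{H}_c^{\mu,n}$ with nested commutators against local operators $i\mathds{1}\otimes\hat{\sigma}_k^{(n)}$ and $i\hat{\sigma}_\beta^{(\mu)}\otimes\mathds{1}$. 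Using the fact that $[\hat\sigma_a,\hat\sigma_b]=2i\epsilon_{abc}\hat\sigma_c$ anti-commutes/commutes appropriately, a commutator with $i\mathds{1}\otimes\hat\sigma_k^{(n)}$ flips the sign of every term whose $n$-th Pauli commutes with $\hat\sigma_k$ relative to those that anticommute; taking suitable linear combinations of $i\hat H_c^{\mu,n}$ and its (iterated) commutators lets me project onto the single term $i\,c_{\alpha,j}\,\hat\sigma_\alpha^{(\mu)}\otimes\hat\sigma_j^{(n)}$, hence onto $i\,\hat\sigma_\alpha^{(\mu)}\otimes\hat\sigma_j^{(n)}\in\mathcal{L}$ after rescaling. (This is the standard "two-qubit entangling coupling distillation" argument; it only uses that at least one $c_{\alpha,j}\neq 0$.)

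Next I would promote this one entangling two-qubit Pauli string to all entangling Pauli strings. Given $i\,\hat\sigma_\alpha^{(\mu)}\otimes\hat\sigma_j^{(n)}\in\mathcal{L}$, apply $f_{cyc}^{(\mu)}$ and $f_{cyc}^{(n)}$ — each is a commutator with a local operator $i\mathds{1}\otimes\hat\sigma_{j+2}^{(n)}\in\mathcal{L}$ or $i\hat\sigma_{\alpha+2}^{(\mu)}\otimes\mathds{1}\in\mathcal{L}$, so stays inside $\mathcal{L}$ — to reach $i\,\hat\sigma_{\alpha'}^{(\mu)}\otimes\hat\sigma_{j'}^{(n)}$ for all nonzero $\alpha',j'$. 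Then $f_{gen}^{(\rho,\mu)}$ (for any site $\rho\neq\mu$ in $\mathcal{A}$, using the local operator at $\mu$ as the required "anchor" Pauli at position $m=\mu$) grows a nonzero Pauli at a new site in $\mathcal{A}$; iterating, and similarly using $f_{gen}$ on the $\mathcal{B}$ side, I can build up any entangling tensor-product Pauli string $i\,\hat\sigma_{\alpha_1}\otimes\cdots\otimes\hat\sigma_{\alpha_M}\otimes\hat\sigma_{j_1}\otimes\cdots\otimes\hat\sigma_{j_N}$ that has at least one nonzero index on each side; $f_{rem}$ handles turning excess nonzero indices back to zero as long as at least one nonzero index remains on each side. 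As emphasized after \cref{eqn:frem}, the composite of these operations realizes "any non-local Pauli string $\to$ any non-local Pauli string." This yields every basis element of \cref{eqn:basis} that is entangling between $\mathcal{A}$ and $\mathcal{B}$.

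Finally I would recover the purely local basis elements. The ones supported entirely on $\mathcal{A}$ (all $j_k=0$) are $\hat\sigma_{\alpha_1}\otimes\cdots\otimes\hat\sigma_{\alpha_M}\otimes\mathds{1}_{2^N}$: these are exactly $\mathfrak{A}\otimes\mathds{1}_{2^N}$, which is in $\mathcal{L}$ by hypothesis; symmetrically for $\mathds{1}_{2^M}\otimes\mathfrak{B}$. Together with the entangling strings from the previous step, this exhausts the basis of $\su(2^{M+N})$ in \cref{eqn:basis}, so $\mathcal{L}\supseteq\su(2^{M+N})$; the reverse inclusion is automatic since all generators are traceless skew-Hermitian, giving $\mathcal{L}\cong\su(2^{M+N})$.

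The main obstacle I anticipate is the first step — cleanly extracting a single Pauli term $i\,\hat\sigma_\alpha^{(\mu)}\otimes\hat\sigma_j^{(n)}$ from the general nine-term coupling $\hat H_c^{\mu,n}$ using only commutators with the available local operators, while keeping track of which terms survive each commutator and verifying that the resulting linear combinations are nondegenerate (i.e. that one genuinely can solve for a single term rather than an unavoidable two-term combination). One must be slightly careful: a single commutator with $i\mathds{1}\otimes\hat\sigma_k^{(n)}$ may map several distinct source terms to the same target string, so the bookkeeping is a small linear-algebra argument over the $3\times 3$ coefficient matrix $(c_{\alpha,j})$ rather than a one-line computation. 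The bootstrapping in the second and third steps, by contrast, is essentially a direct invocation of the machinery already set up in \cref{ssec:localop}, once one checks that every operation used is a commutator against an element already known to lie in $\mathcal{L}$.
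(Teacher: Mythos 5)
Your proposal is correct and follows essentially the same route as the paper: isolate a single entangling Pauli string $i\hat{\sigma}_\alpha^{(\mu)}\otimes\hat{\sigma}_j^{(n)}$ from $\hat{H}_c^{\mu,n}$ via commutators with local operators, then use $f_{cyc}$, $f_{gen}$ and $f_{rem}$ to reach every non-local Pauli string, with the purely local basis elements supplied by $\mathfrak{A}\otimes\mathds{1}_{2^N}$ and $\mathds{1}_{2^M}\otimes\mathfrak{B}$. The extraction step you flag as the main obstacle is resolved in the paper by an explicit nested commutator, $\bigl[i\hat{\sigma}_3^{(\mu)}\otimes\mathds{1},\,[i\hat{\sigma}_1^{(\mu)}\otimes\mathds{1},\,\cdot\,]\bigr]$, which annihilates the $\alpha\in\{1,2\}$ rows and maps the $\alpha=3$ row to $\alpha=1$, followed by the analogous projection on the $\mathcal{B}$ side — exactly the clean one-term projection you anticipated needing.
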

\begin{proof}
We use tensor products of Pauli matrices for the bases of the algebras $\su(2^M)$, $\su(2^M)$ and $\su(2^{M+N})$. 

Without loss of generality, we can assume the two qubit coupling to be of the form $\hat{H}_{c}^{\mu, n} = c_{\alpha, j} \hat{\sigma}_{\alpha}^{(\mu)}\otimes \hat{\sigma}_{j}^{(n)}$, with only one nonzero coefficient $c_{\alpha, j}$ for some fixed $\alpha, j \in [1,2,3]$. Indeed, assume that $\hat{H}_{c}^{\mu, n}$ has a nonzero contribution $\hat{\sigma}_{3}^{(\mu)} \otimes\hat{\sigma}_{3}^{(n)}$, i.e., $c_{3, 3}\neq 0$. We can isolate that term by using commutators of $\hat{H}_{c}^{\mu, n}$ with elements of $\mathfrak{A}\otimes\mathds{1}_{2^N}$ and $\mathds{1}_{2^M}\otimes\mathfrak{B}$. Given the usual commutation relations we compute the following element: 
\begin{align}
    \Biggl[ i\hat{\sigma}^{(\mu)}_3 \otimes \mathds{1}_{2^N},& \;   \Biggl[    i\hat{\sigma}^{(\mu)}_1 \otimes \mathds{1}_{2^N},\; i\sum_{\alpha, j =1}^3 c_{\alpha, j} \hat{\sigma}_{\alpha}^{(\mu)}\otimes \hat{\sigma}_{j}^{(n)} \Biggr]\Biggr] \; = \nonumber \\ 
    =&\; \left[i\hat{\sigma}^{(\mu)}_3 \otimes \mathds{1}_{2^N},\; 2i\sum_{j=1}^3 c_{3, j} \hat{\sigma}_{2}^{(\mu)} \otimes\hat{\sigma}_{j}^{(n)} \, -\, 2i\sum_{k =1}^3 c_{2, k} \hat{\sigma}_{3}^{(\mu)} \otimes\hat{\sigma}_{k}^{(n)}\; \right] \nonumber \\
    =&\; 4i \sum_{j=1}^3 c_{3, j}  \hat{\sigma}_{1}^{(\mu)} \otimes\hat{\sigma}_{j}^{(n)}  \quad \in \mathcal{L}, \;  
\end{align}
where we have eliminated all coefficients with $\alpha \neq 3$. Similarly, 
\begin{equation}
    \left[i\hat{\sigma}^{(\mu)}_3 \otimes \mathds{1}_{2^N}, \; \left[i\hat{\sigma}^{(\mu)}_1 \otimes \mathds{1}_{2^N}, \; 4i \sum_{j=1}^3 c_{3, j} c_{3, j} \hat{\sigma}_{1}^{(\mu)} \hat{\sigma}_{j}^{(n)}\right]\right] 
    \;=\; 16i \, c_{3, 3} \, \hat{\sigma}_{1}^{(\mu)} \hat{\sigma}_{1}^{(n)} \quad \in \mathcal{L}.  
\end{equation}
Therefore, using commutators we were able to obtain a single tensor product of Pauli matrices between algebras, $\hat{\sigma}_{\alpha}^{(\mu)}\otimes \hat{\sigma}_{j}^{(n)}$ (with $\alpha = j = 1$ in the previous example). For any two-qubit operator we can isolate one term and assume it to be in the form $\hat{\sigma}_{\alpha}^{(\mu)} \otimes \hat{\sigma}_{j}^{(n)}$. If
    \begin{equation}\label{eqn:minor_entalg}
        Lie\left[\mathfrak{A}\otimes\mathds{1}_{2^N},\; \mathds{1}_{2^M}\otimes\mathfrak{B},\; i\hat{\sigma}_{\alpha}^{(\mu)}\otimes  \hat{\sigma}_{j}^{(n)} \right]\, \cong \,\mathfrak{su}(2^{M+N}) 
    \end{equation}
then \cref{eqn:lemma_entalg} is true, since 
    \begin{equation}
        Lie\left[\mathfrak{A}\otimes\mathds{1}_{2^N},\; \mathds{1}_{2^M}\otimes\mathfrak{B},\; i\hat{\sigma}_{\alpha}^{(\mu)} \otimes \hat{\sigma}_{j}^{(n)} \right]\, \subseteq \, Lie\left[\mathfrak{A}\otimes\mathds{1}_{2^N},\; \mathds{1}_{2^M}\otimes\mathfrak{B},\; i\hat{H}_{c}^{\mu, n} \right].
    \end{equation}

To prove \cref{eqn:minor_entalg}, we show that we can obtain every element 
\begin{equation}\label{eqn:basis_pauli}
    i\left(\hat{\sigma}_{\alpha_1}^{(1)}...\hat{\sigma}_{\alpha_M}^{(M)}\right)\otimes\left(\hat{\sigma}_{j}^{(1)}...\hat{\sigma}_{j^N}^{(N)}\right) \qquad \forall \alpha_\beta, j_k \in [0,1,2,3], 
\end{equation}
with at least one $\alpha_\beta$ or $j_k$ nonzero. These elements form a basis of $\mathfrak{su}(2^{M+N})$. Note that the elements where $\alpha_\beta = 0$ with $1\leq\beta\leq M$ already belong to $\mathds{1}_{2^M}\otimes\mathfrak{B}$, whereas the terms with $j_k = 0$ for every $0\leq k \leq N$ belong to $\mathfrak{A}\otimes\mathds{1}_{2^N}$. 
Therefore, we only have to prove it for the cases where at least one $\alpha_\beta$ and at least one $j_k$ are nonzero. 
Starting with the two qubit coupling $\hat{H}_{c}^{\mu, n} = c_{\alpha, j} \hat{\sigma}_{\alpha}^{(\mu)}\otimes \hat{\sigma}_{j}^{(n)}$ and using the operations $f_{cyc}^{(n)}$, $f_{gen \, k}^{(n, m)}$, $f_{rem}^{(n, m)}$, $f_{cyc}^{(\rho)}$, $f_{gen \, k}^{(\rho, \tau)}$ and $f_{rem}^{(\rho, \tau)}$ (for every possible value of $n$, $m$, $\rho$ and $\tau$) defined in \cref{ssec:localop} we can obtain any non-local tensor product of Pauli matrices using only commutators of $\hat{H}_{c}^{\mu, n}$ and local operators. Therefore, we can generate a basis of $\su(2^{M+N})$ and the proof concludes. 

\end{proof}

With these three lemmas we have all the necessary intermediate results to prove the main idea of this work, which is properly shown and discussed in the following subsection. 

\subsection{Controllability of two controllable qubit arrays coupled via a two-qubit control}

We finally can prove that the system formed by two controllable qubit arrays and a two-qubit tunable coupling that connects both is also operator controllable and hence a suitable candidate for universal quantum computing. The condition that must be satisfied had already been summarized in \cref{eqn:final_thm}.

\begin{proof}[Proof of Theorem \ref{thm:modular}]

The Hamiltonians of systems $\mathcal{A}$ and $\mathcal{B}$ can be written as Equations (\ref{eqn:HA}) and (\ref{eqn:HB}), respectively. Thus the Hamiltonian of the bipartite system composed of subsystem $\mathcal{A}$, subsystem $\mathcal{B}$ and the tunable coupling $\hat{H}^{\mu, n}_c$ (cf. \cref{eqn:2qubit}) follows the form of \cref{eqn:HAB}. 

A simple calculation of the dynamical Lie algebra from Equation (\ref{eqn:final_algebra}) yields
\begin{align}
\mathcal{L}_{\mathcal{A}\otimes \mathcal{B}} =&\; Lie \left[ \hat{H}_0^{AB},\;  
    \{i\hat{A}_\rho\otimes \mathds{1}_{2^N}\}_{\rho=1}^{C_A}, \;  \{i \mathds{1}_{2^M} \otimes\hat{B}_j\}_{j=1}^{C_B},  \; 
    i \hat{H}^{\mu, n}_c\right] = \nonumber \\
=&\; Lie \left[ Lie\left[ i\hat{H}^{\mathcal{AB}}_0,\;  
    \{i\hat{A}_\rho\otimes \mathds{1}_{2^N}\}_{\rho=1}^{C_A}\right],\; 
    \{i \mathds{1}_{2^M} \otimes\hat{B}_j\}_{j=1}^{C_B},  \; 
    i \hat{H}^{\mu, n}_c\right] \stackrel{\text{Lemma }\ref{lemma:AplusB}}{=} \nonumber \\
=&\; Lie \left[ \mathfrak{A}\otimes\mathds{1}_{2^N} \,\bigoplus \,
 i\mathds{1}_{2^M} \otimes \hat{B}_0,\;  \{i \mathds{1}_{2^M} \otimes\hat{B}_j\}_{j=1}^{C_B},\; 
    i \hat{H}^{\mu, n}_c\right] = \nonumber \\
=&\; Lie \left[ \mathfrak{A}\otimes\mathds{1}_{2^N},\; 
 i\mathds{1}_{2^M} \otimes \hat{B}_0,\;  \{i \mathds{1}_{2^M} \otimes\hat{B}_j\}_{j=1}^{C_B},\; 
    i \hat{H}^{\mu, n}_c\right] =  \\
=&\; Lie \left[ \mathfrak{A}\otimes\mathds{1}_{2^N},\; 
Lie\left[ i\mathds{1}_{2^M} \otimes \hat{B}_0,\;  \{i \mathds{1}_{2^M} \otimes\hat{B}_j\}_{j=1}^{C_B}\right],\; 
    i \hat{H}^{\mu, n}_c\right]  = \nonumber \\
=&\; Lie \left[ \mathfrak{A}\otimes\mathds{1}_{2^N},\;  
    \mathds{1}_{2^M}\otimes\mathfrak{B},\; 
    i \hat{H}^{\mu, n}_c\right] \stackrel{\mathcal{A}, \, \mathcal{B}\text{ operator controllable}}{=} \nonumber \\
=&\; Lie \left[ \su(2^M)\otimes\mathds{1}_{2^N},\;  
    \mathds{1}_{2^M}\otimes\su(2^N),\; 
    i \hat{H}^{\mu, n}_c\right] \stackrel{\text{Lemma }\ref{lemma:entangling_algebras}}{=} \nonumber \\
=&\; \su\left(2^{M+N}\right). \nonumber
\end{align}
Therefore the bipartite system is operator controllable for any entangling two-qubit coupling. 
\end{proof}

As a final note and without an explicit proof, this result can also be extended to any type of control that is entangling between the partitions $\mathcal{A}$ and $\mathcal{B}$. The main idea behind it is that the same process by which $i\hat{H}^{\mathcal{AB}}_c$ was reduced to a single product $c_{\alpha, j} \hat{\sigma}_{\alpha}^{(\mu)}\otimes \hat{\sigma}_{j}^{(n)}$ can be analogously defined for any type of entangling coupling. 

At first, this may seem like an unnecessary remark, as two-qubit couplings are often more easily implemented than interactions between three or more qubits. But it opens up a range of possibilities that are not constricted by this hypothesis. For example, it includes global controls that operate simultaneously on every qubit, like electromagnetic fields acting on trapped ions or on NV centers. Furthermore, using entangling controls that affect more than two qubits may increase the ratio at which information is shared between the two subsystems, depending on the architecture of the remaining couplings and controls. This generality offers a great freedom when using controllable subsystems as building blocks of a larger, controllable system.

\end{document}